\newcommand{\Ch}{\Children}
\newcommand{\Children}{\ensuremath{{\mathrm{Ch}}}}
\newcommand{\parent}{\ensuremath{{\rm parent}}}
\newcommand{\pred}{\ensuremath{{\rm pred}}}
\newcommand{\QED}{\hfill$\Box$}
\newcommand{\DG}{\ensuremath{H_\P}} 
\newcommand{\Cl}{\ensuremath{{\mathrm{Cl}}}} 		
\newcommand{\NULL}{\ensuremath{\mathtt{null}}}
\newcommand{\MP}{\ensuremath{M_\P}} 
\newcommand{\pos}{\ensuremath{\pi}}
\newcommand{\initPos}{\ensuremath{\pi^\mathrm{init}}}
\newcommand{\posSet}{\ensuremath{\Pi}}
\newcommand{\enqueue}{\ensuremath{\textsc{enqueue}}}
\newcommand{\dequeue}{\ensuremath{\textsc{dequeue}}}
\newcommand{\Build}{\ensuremath{\textsc{Build}}} 
\newcommand{\Oh}{\ensuremath{\mathcal{O}}}
\renewcommand{\P}{\ensuremath{\mathcal{P}}}
\newcommand{\T}{\ensuremath{\mathcal{T}}}
\newcommand{\Q}{\ensuremath{\mathcal{Q}}}
\newcommand{\CS}{\ensuremath{A}}
\newcommand{\expose}{\ensuremath{\mathtt{appear}}} %
\newcommand{\DISAGREEMENT}{\ensuremath{\mathtt{disagreement}}}
\newcommand{\LCA}{\ensuremath{\mathrm{LCA}}} 
\newtheorem{assumption}{Assumption}
\newtheorem{observation}{Observation}
\begin{document}
%
\title{Testing the Agreement of Trees with Internal Labels
}
%
%
\author{David Fern\'{a}ndez-Baca
\and
Lei Liu} 
%
\authorrunning{D.\ Fern\'{a}ndez-Baca, L.\ Liu}
%

\institute{Department of Computer Science, Iowa State University, Ames IA 50011, USA
\email{\{fernande,lliu\}@iastate.edu}} 
%
\maketitle              
%



\begin{abstract}
The input to the agreement problem is a collection $\P = \{\T_1, \T_2, \dots , \T_k\}$ of phylogenetic trees, called input trees, over partially overlapping sets of taxa.  The question is whether there exists a tree $\T$, called an agreement tree, whose taxon set is the union of the taxon sets of the input trees, such that for each $i \in \{1, 2, \dots , k\}$, the restriction of $\T$ to the taxon set of $\T_i$ is isomorphic to $\T_i$.  We give a $\Oh(n k (\sum_{i \in [k]} d_i + \log^2(nk)))$ algorithm for a generalization of the agreement problem in which the input trees may have internal labels, where $n$ is the total number of distinct taxa in \P, $k$ is the number of trees in \P, and $d_i$ is the maximum number of children of a node in $\T_i$.

\keywords{Phylogenetic tree \and Taxonomy  \and Agreement \and Algorithm.}
\end{abstract}

%
%
\section{Introduction}

In the \emph{tree agreement problem} (\emph{agreement problem}, for short), we are given a collection $\P = \{\T_1, \T_2,  \ldots, \T_k\}$ of rooted phylogenetic trees with partially overlapping taxon sets. $\P$ is called a \emph{profile} and the trees in $\P$ are the \emph{input trees}.  The question is whether there exists a tree $\T$ whose taxon set is the union of the taxon sets of the input trees, such that, for each $i \in \{1, 2, \dots , k\}$, $\T_i$ is isomorphic to the restriction of $\T$ to the taxon set of $\T_i$.  If such a tree $\T$ exists, then we call $\T$ an \emph{agreement tree} for $\P$ and say that $\P$ \emph{agrees}; otherwise, $\P$ \emph{disagrees}.   
 The first explicit polynomial-time algorithm for the agreement problem is in reference  \cite{NgWormald96}\footnote{These authors refer to what we term ``agreement'' as ``compatibility''.  What we call ``compatibility'', they call ``weak compatibility''.}.  The agreement problem can be solved in $O(n^2 k)$ time, where $n$ is the number of distinct taxa in $\P$ \cite{FBGSV2015}.   

Here we study a generalization of the agreement problem, where the internal nodes of the input trees may also be labeled. These labels represent higher-order taxa; i.e., in effect, sets of taxa.  Thus, for example,   
an input tree may contain the taxon \emph{Glycine max} (soybean) nested within a subtree whose root is labeled Fabaceae (the legumes), itself nested within an Angiosperm subtree.  Note that leaves themselves may be labeled by higher-order taxa.  We present a $\Oh(n k (\sum_{i \in [k]} d_i + \log^2(nk)))$ algorithm for the agreement problem for trees with internal labels, where $n$ is the total number of distinct taxa in \P, $k$ is the number of trees in \P, and, for each $i \in \{1, 2, \dots , k\}$, $d_i$ is the maximum number of children of a node in $\T_i$.


\paragraph{Background.}  
A close relative of the agreement problem is the \emph{compatibility problem}.  The input to the compatibility problem is a profile $\P = \{\T_1, \T_2,  \ldots, \T_k\}$ of rooted phylogenetic trees with partially overlapping taxon sets.  The question is whether there exists a tree $\T$ whose taxon set is the union of the taxon sets of the input trees such that each input tree $\T_i$ can be obtained from the restriction of $\T$ to the taxon set of $\T_i$ through edge contractions.
If such a tree $\T$ exists, we refer to $\T$ as a \emph{compatible tree} for $\P$ and say that $\P$ is \emph{compatible}; otherwise, $\P$ is \emph{incompatible}. 
Compatibility is a less stringent requirement than agreement; therefore, any profile that agrees is compatible, but the converse is not true. 
The compatibility problem for phylogenies (i.e., trees without internal labels), is solvable in $\Oh(\MP \log^2 \MP)$ time, where $\MP$ is the total number of nodes and edges in the trees of $\P$ \cite{DengFB2017}. Note that $\MP = \Oh(nk)$.

Compatibility and agreement reflect two distinct approaches to dealing with \emph{multifurcations}; i.e., non-binary nodes, also known as  \emph{polytomies}.  Suppose that node $v$ is a multifurcation in some input tree of $\P$ and that $\ell_1$, $\ell_2$, and $\ell_3$ are taxa in three distinct subtrees of $v$.  In an agreement tree for $\P$, these three taxa must be in distinct subtrees  of some node in the agreement tree.  In contrast, a compatible tree for $\P$ may contain no such node, since a compatible tree is allowed to ``refine'' the multifurcation at $v$ --- that is, group two out of $\ell_1$, $\ell_2$, and $\ell_3$ separately from the third.  
Thus, compatibility treats multifurcations  as ``soft'' facts;  agreement treats them as ``hard'' facts
\cite{Maddison89}.   Both viewpoints can be valid, depending on the circumstances. 

The agreement and compatibility problems are fundamental special cases of the \emph{supertree problem}, the problem of synthesizing a collection of phylogenetic trees with partially overlapping taxon sets into a single supertree that represents the information in the input trees \cite{BinindaEmonds04,Baum:1992,Ragan:1992,WarnowSupertrees2018}.   
The original supertree methods were limited to input trees where only the leaves are labeled, but there has been increasing interest in incorporating internally labeled trees in supertree analysis, motivated by the desire to incorporate \emph{taxonomies} in these analyses. Taxonomies group organisms according to a system of taxonomic rank  (e.g., family, genus, and species); two examples are the NCBI taxonomy \cite{NCBI2009} and the Angiosperm taxonomy \cite{APG2016}.  Taxonomies 
provide structure and completeness that can be hard to obtain otherwise \cite{Page2004,HinchliffPNAS2015,RedelingsHolder2017}, offering a way to circumvent one of the obstacles to building comprehensive phylogenies: the limited taxonomic overlap among different phylogenetic studies \cite{Sanderson:2008}. 

Although internally labeled trees, and taxonomies in particular, are not, strictly speaking, phylogenies, they have many of the same mathematical properties as phylogenies.  Both phylogenies and  internally labeled trees are \emph{$X$-trees} (also called \emph{semi-labeled trees}) 
\cite{BordewichEvansSemple2006,SempleSteel03}.  Algorithmic results for compatibility and agreement of internally labeled trees are scarce, compared to what is available for ordinary phylogenies.  
To our knowledge, the first algorithm for testing compatibility of internally labeled trees is in  \cite{DanielSemple2004} (see also \cite{BerrySemple2006}).   The fastest known algorithm for the problem runs in $\Oh(\MP \log^2 \MP)$ time \cite{DengFB2017b}.  We are unaware of any previous algorithmic results for the agreement problem for internally labeled trees.

All algorithms for compatibility and agreement that we know of are indebted  to Aho et al.'s \Build  algorithm \cite{AhoSagivSzymanskiUllman81}.  The time bounds for agreement algorithms are higher than those of compatibility algorithms, due to the need for agreement trees to respect the multifurcations in the input trees.   To handle agreement, \Build has to be modified so that certain sets of the partition of the taxa it generates are re-merged to reflect the multifurcations in the input trees, adding considerable overhead \cite{NgWormald96,FBGSV2015} (similar issues are faced when testing consistency of triples and fans \cite{JanssonRECOMB2017}). This issue becomes more complex for internally labeled trees, in part because internal nodes with the same label, but in different trees, may jointly imply multifurcations, even if all input trees are binary.



\paragraph{Organization of the paper.}
Section \ref{sec:prelims} provides a formal definition of the agreement problem for internally labeled trees.  Section \ref{sec:positions} studies the decomposability properties of profiles that agree. These properties allow us to reduce an agreement problem on a profile into independent agreement problems on subprofiles, leading to the agreement algorithm presented in Section \ref{sec:constructingATs}.  Section \ref{sec:discussion} contains some final remarks.  All proofs are in the Appendix.


%
%

\section{Preliminaries}\label{sec:prelims}

For each positive integer $r$, $[r]$ denotes the set $\{1, \dots , r\}$.

\paragraph{Graphs and trees.} 
Let $G$ be a graph. $V(G)$ and $E(G)$ denote the node and edge sets of $G$. Let $U$ be a subset of $V(G)$. Then the \emph{subgraph of $G$ induced by $U$} is the graph whose vertex set is $U$ and whose edge set consists of all of the edges in $E(G)$ that have both endpoints in $U$.

A \emph{tree} is an acyclic connected graph.
All trees here are assumed to be rooted.  For a tree $T$, $r(T)$ denotes the root of $T$. 
Suppose $u, v \in V(T)$.  Then, $u$ is an \emph{ancestor} of $v$ in $T$, denoted $u \le_T v$, if $u$ lies on the path from $v$ to $r(T)$ in $T$.  If $u \le_T v$, then $v$ is a \emph{descendant} of $u$.  Node %
$u$ is a \emph{proper ancestor} of $v$, denoted $u <_T v$, if $u \le_T v$ and $u\neq v$. If $\{u,v\} \in E(T)$ and $u \le_T v$, then  $u$ is the \emph{parent} of $v$ and $v$ is a \emph{child} of $u$. 
For each $x \in V(T)$, we use $\parent_T(x)$, and $\Ch_T(x)$, $T(x)$ to denote the parent of $x$, the children of $x$, and the subtree of $T$ rooted at $x$, respectively.  We extend the child notation to subsets of $V(T)$ in the natural way: for $U \subseteq V(T)$, $\Ch_T(U) = \bigcup_{u \in U} \Ch_T(u)$.  Thus, if $U = \emptyset$, then $\Ch_T(U) = \emptyset$.

Let $T$ be a tree and suppose $U \subseteq V(T)$.
The \emph{lowest common ancestor of $U$ in $T$}, denoted $\LCA_T(U)$, is the unique smallest upper bound of $U$ under $\le_T$.


\paragraph{$X$-trees\label{sec:XTrees}.} 

Throughout the paper, $X$ denotes a set of \emph{labels} (that is, taxa, which may be, e.g., species or families of species).
An \emph{$X$-tree} is a pair $\T = (T,\phi)$ where $T$ is a tree and $\phi$ is a mapping from $X$ to $V(T)$ such that, for every node $v \in V(T)$ of degree at most two, $v \in \phi(X)$.   $X$ is the \emph{label set} of $\T$ and $\phi$ is the \emph{labeling function} of $\T$.  
For every node $v \in V(T)$, $\phi^{-1}(v)$ denotes the (possibly empty) subset of $X$ whose elements map into $v$; these elements as the \emph{labels of $v$}. If $\phi^{-1}(v) \neq \emptyset$, then $v$ is \emph{labeled}; otherwise, $v$ is \emph{unlabeled}.  

By definition, every leaf in an $X$-tree is labeled, and any node, including the root, that has a single child must be labeled.  Nodes with two or more children may be labeled or unlabeled.  An $X$-tree $\T = (T,\phi)$ is \emph{singly labeled} if every node in $T$ has
at most one label; $\T$ is \emph{fully labeled} if every node in $T$ is labeled.  

$X$-trees, also known as \emph{semi-labeled trees}, generalize ordinary phylogenetic trees (also known as \emph{phylogenetic $X$-trees} \cite{SempleSteel03}).  An ordinary phylogenetic tree is a semi-labeled tree $\T = (T,\phi)$ where $r(T)$ has degree at least two and $\phi$ is a bijection from $L(\T)$ into leaf set of $T$ (thus, internal nodes are not labeled). 



Let  $\T = (T,\phi)$ be an $X$-tree. For each $u \in V(T)$,  $X(u)$ denotes the set of all labels in the subtree of $T$ rooted at $u$; that is, $X(u) = \bigcup_{v: u \le_T v} \phi^{-1}(v)$. $X(u)$ is called a \emph{cluster} of $T$.  
$\Cl(\T)$ denotes the set of all clusters of $\T$.   
We extend the cluster notation to sets of nodes as follows.  Let $U$ be a subset of $V(T)$.  Then, $X(U) = \bigcup_{v \in U} X(v)$.  If $U = \emptyset$, then $X(U) = \emptyset$.
 

Suppose $Y \subseteq X$ for an $X$-tree $\T = (T,\phi)$.  The \emph{restriction} of $\T$ to $Y$, denoted $\T |Y$, is the semi-labeled tree whose cluster set is
$\Cl(\T | Y) = \{W \cap Y : W \in \Cl(\T) \text{ and } W \cap Y \neq \emptyset \}.$
Intuitively, $\T | Y$ is obtained from the minimal rooted subtree of $T$ that connects the nodes in $\phi(Y)$ by suppressing all vertices $v$ such that $v \notin \phi(Y)$ and $v$ has only one child.

Let $\T = (T,\phi)$ be an $X$-tree and $\T' = (T', \phi')$ be an $X'$-tree such that $X' \subseteq X$.  $\T$  \emph{agrees with} $\T'$ if $\Cl(\T') = \Cl(\T|X')$.  It is well known that the clusters of a tree determine the tree, up to isomorphism \cite[Theorem 3.5.2]{SempleSteel03}.  Thus, $\T$  agrees with $\T'$ if $\T'$ and $\T|X'$ are isomorphic.

\paragraph{Profiles and agreement.}

Throughout the rest of this paper, $\P$ denotes a set $\{\T_1, \T_2, \dots, \T_k\}$ such that, for each $i \in [k]$, $\T_i = (T_i, \phi_i)$ is a phylogenetic $X_i$-tree for some set $X_i$ (Figure \ref{fig:profile}). We refer to \P\ as a \emph{profile}, and to the trees in \P\ as \emph{input trees}.  
We write $X_\P$ to denote $\bigcup_{i \in [k]} X_i$.

\begin{figure}[t]
\centering
\subfloat[\label{fig:profile}]{%
  \includegraphics[width=0.65\textwidth]{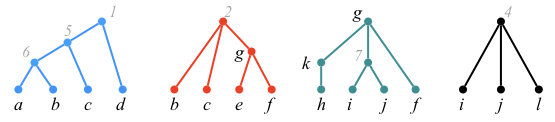}%
}\hfil
\subfloat[\label{fig:tree}]{%
  \includegraphics[width=0.24\textwidth]{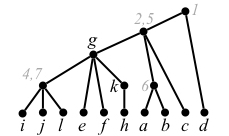}%
}

\caption{
(a) A profile $\P= \{\T_1, \T_2, \T_3, \T_4\}$. 
(b) An agreement tree for $\P$.} 
\end{figure}


A profile $\P$ \emph{agrees} if there is an $X_\P$-tree $\T$ that agrees with each of the trees in $\P$.  If $\T$ exists, we refer to $\T$ as an \emph{agreement tree for $\P$}. 
See Figure \ref{fig:tree}. 



Given a subset $Y$ of $X_\P$, the \emph{restriction} of $\P$ to $Y$, denoted $\P|Y$, is the profile defined as 
$\P|Y = \{\T_1|Y \cap X_1, \T_2|Y \cap X_2, \dots, \T_k|Y \cap X_k\}.$
The proof of the following lemma is straightforward.

\begin{lemma}\label{lm:compatSubprofile}
Suppose a profile $\P$ has an agreement tree $\T$.  Then, for any $Y \subseteq X_\P$, $\T | Y$ is an agreement tree for $\P| Y$.
\end{lemma}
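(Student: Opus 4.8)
The plan is to verify the definition of an agreement tree directly. The union of the label sets of $\P|Y$ is $\bigcup_{i}(Y\cap X_i) = Y\cap X_\P = Y$, so $\T|Y$ has the correct label set to be an agreement tree for $\P|Y$, and by definition it is one exactly when, for every $i\in[k]$, the $Y$-tree $\T|Y$ agrees with the $i$-th input tree $\T_i|(Y\cap X_i)$ of $\P|Y$; that is, when $\Cl\bigl(\T_i|(Y\cap X_i)\bigr) = \Cl\bigl((\T|Y)|(Y\cap X_i)\bigr)$. So I would fix an arbitrary index $i$ and establish this single equality of cluster sets. Note that $Y\cap X_i \subseteq Y \subseteq X_\P$ and $Y\cap X_i \subseteq X_i$, so all the restrictions appearing below are well defined.

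The one reusable fact I would isolate first is a transitivity property of restriction: for any $X$-tree $\R$ and any sets $Z'\subseteq Z\subseteq X$, the trees $(\R|Z)|Z'$ and $\R|Z'$ have the same cluster set. This follows by unwinding the cluster definition $\Cl(\R|Z) = \{W\cap Z : W\in\Cl(\R),\ W\cap Z\neq\emptyset\}$ twice: a cluster of $(\R|Z)|Z'$ has the form $(W\cap Z)\cap Z'$ with $W\in\Cl(\R)$, and because $Z'\subseteq Z$ we have $(W\cap Z)\cap Z' = W\cap Z'$, while the two nonemptiness side conditions collapse to the single condition $W\cap Z'\neq\emptyset$ (indeed $W\cap Z'\neq\emptyset$ already forces $W\cap Z\neq\emptyset$). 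Hence the two cluster sets coincide, and by the fact cited in the preliminaries that clusters determine a tree up to isomorphism, the two trees are isomorphic.

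With this in hand the rest is substitution. On the left, since $\T$ is an agreement tree for $\P$ we have $\Cl(\T_i) = \Cl(\T|X_i)$, so $\T_i|(Y\cap X_i)$ and $(\T|X_i)|(Y\cap X_i)$ share a cluster set; applying transitivity with $Z = X_i$ and $Z' = Y\cap X_i$ gives $\Cl\bigl(\T_i|(Y\cap X_i)\bigr) = \Cl\bigl(\T|(Y\cap X_i)\bigr)$. On the right, applying transitivity with $Z = Y$ and $Z' = Y\cap X_i$ gives $\Cl\bigl((\T|Y)|(Y\cap X_i)\bigr) = \Cl\bigl(\T|(Y\cap X_i)\bigr)$. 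Both sides therefore equal $\Cl\bigl(\T|(Y\cap X_i)\bigr)$, which establishes the required equality for every $i$ and hence the lemma.

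There is no serious obstacle here: the only point that requires care is the bookkeeping inside the transitivity step, namely checking that the nonemptiness conditions of the nested restriction reduce correctly when $Z'\subseteq Z$, so that no spurious clusters are introduced and none are dropped. Once that is settled, the lemma is immediate.
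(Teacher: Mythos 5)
Your proof is correct; the paper states this lemma without proof (calling it straightforward), and your direct verification via the transitivity of restriction, $\Cl((\R|Z)|Z') = \Cl(\R|Z')$ for $Z' \subseteq Z \subseteq X$, is precisely the standard argument the authors are alluding to. The one point needing care --- that the two nonemptiness side conditions in the nested restriction collapse to $W \cap Z' \neq \emptyset$ --- is handled correctly, so nothing is missing.
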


Suppose $\P$ contains trees that are not fully labeled.  We can convert $\P$ into an equivalent profile $\P'$ of fully-labeled trees as follows.  For each $i \in [k]$, let $l_i$ be the number of unlabeled nodes in $T_i$.  Create a set $X'$ of $n' = \sum_{i \in [k]} l_i$ labels such that $X' \cap X_\P = \emptyset$.  For each $i \in [k]$ and each $v \in V(T_i)$ such that $\phi_i^{-1}(v) = \emptyset$, make $\phi_i^{-1}(v) = \{\ell\}$, where $\ell$ is a distinct element from $X'$.
We refer to $\P'$ as the \emph{profile obtained by adding distinct new labels to $\P$} (see Figure \ref{fig:profile}).

\begin{lemma}
\label{lm:fullyL}
Let $\P'$ be the profile obtained by adding distinct new labels to $\P$. Then, $\P$ agrees if and only if $\P'$ agrees. Further, if $\T$ is an agreement tree for $\P'$, then $\T$ is also and agreement tree for $\P$.
\end{lemma}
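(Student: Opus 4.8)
The plan is to prove the two directions of the biconditional separately, after first isolating a ``restriction identity'' that ties each $\T_i'$ back to $\T_i$. Write $X_i' = X_i \cup Z_i$, where $Z_i$ is the set of new labels inserted into $T_i$; the $Z_i$ are pairwise disjoint and disjoint from $X_\P$, and the underlying tree of $\T_i'$ is exactly $T_i$. The first step is to show that $\T_i'|X_i$ agrees with $\T_i$ for every $i$. Working from the cluster definition, the cluster of $u \in V(T_i)$ in $\T_i'$ is $X_i'(u)$; since the new labels lie outside $X_i$, we have $X_i'(u) \cap X_i = X_i(u)$, and because every leaf of $\T_i$ is labeled, $X_i(u) \neq \emptyset$. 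Hence $\Cl(\T_i'|X_i) = \{X_i(u) : u \in V(T_i)\} = \Cl(\T_i)$. In particular this shows $\P'|X_\P = \P$ up to isomorphism.

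For the direction ``$\P'$ agrees $\Rightarrow$ $\P$ agrees'' (which simultaneously yields the \emph{Further} claim), I would take an agreement tree $\T$ for $\P'$ and show that $\T$ agrees with each $\T_i$. Since $X_i \subseteq X_i'$, restriction is transitive at the level of clusters, so $\T|X_i = (\T|X_i')|X_i$; combining $\Cl(\T|X_i') = \Cl(\T_i')$ with the restriction identity gives $\Cl(\T|X_i) = \Cl(\T_i'|X_i) = \Cl(\T_i)$. Thus $\T$ agrees with every tree in $\P$, so $\P$ agrees and the same $\T$ serves as an agreement tree for $\P$ (its further restriction $\T|X_\P$ is, if one insists, a bona fide $X_\P$-tree agreeing with each $\T_i$, again by transitivity).

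The substantive direction is ``$\P$ agrees $\Rightarrow$ $\P'$ agrees''. Starting from an agreement tree $\T = (T,\phi)$ for $\P$, I would build $\T^\ast = (T,\phi^\ast)$ on the \emph{same} tree $T$ by extending $\phi$: for each $i$ and each node $v$ unlabeled in $\T_i$ that receives new label $z_v \in Z_i$, set $\phi^\ast(z_v) = \LCA_T(\phi(X_i(v)))$, and call this node $u_v$. Because $\T$ agrees with $\T_i$, one checks $X_\P(u_v) \cap X_i = X_i(v)$, where $X_\P(u_v)$ denotes the cluster of $u_v$ in $\T$. As $\phi^\ast$ only adds labels, $\T^\ast$ is still a valid $X_{\P'}$-tree (every node of degree at most two was already labeled under $\phi$). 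It remains to verify $\Cl(\T^\ast|X_j') = \Cl(\T_j')$ for each $j$, which is where the real work lies. For $u \in V(T)$ set $C := X_\P(u) \cap X_j$; a new label $z_v$ (for $v$ unlabeled in $T_j$) lies in the $X_j'$-cluster of $u$ exactly when $u_v$ lies in the subtree $T(u)$, which happens iff $X_j(v) \subseteq C$.

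The crux is the equivalence, for an unlabeled node $v$ and any $w \in V(T_j)$, that $X_j(v) \subseteq X_j(w)$ iff $w \le_{T_j} v$. The forward implication is delicate and is precisely where the hypothesis that unlabeled nodes have degree at least three (hence at least two children with nonempty, pairwise disjoint clusters) is needed: if $v$ were a proper ancestor of $w$ or incomparable to $w$, a sibling subtree or cluster-disjointness would force $X_j(v) \not\subseteq X_j(w)$. Granting this, as $u$ ranges over $V(T)$ the set $C$ ranges over $\Cl(\T_j) = \Cl(\T|X_j)$, and the attached labels $\{z_v : X_j(v) \subseteq C\}$ are exactly those below the corresponding node of $T_j$, so each cluster of $\T^\ast|X_j'$ is some $X_j'(w)$ and conversely. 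I expect the bookkeeping in this last matching — pairing clusters of $\T^\ast|X_j'$ with those of $\T_j'$ and checking that the empty-intersection cases drop out — to be the main obstacle, with the degree-at-least-three structural fact as the key enabling ingredient.
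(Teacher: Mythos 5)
Your proof is correct. There is nothing in the paper to compare it against: despite the claim that all proofs appear in the Appendix, the paper gives no proof of this lemma at all, so your argument is effectively supplying a missing piece rather than diverging from an existing one. The two halves of your plan are sound. The easy direction (an agreement tree for $\P'$ restricts to one for $\P$) follows exactly as you say from $\Cl(\T_i'|X_i)=\Cl(\T_i)$ and transitivity of restriction on clusters. For the substantive direction, placing each new label $z_v$ at $\LCA_T(\phi(X_i(v)))$ in an existing agreement tree $\T=(T,\phi)$ for $\P$ is the right construction; your computation that the $X_j'$-cluster of $u\in V(T)$ is $C\cup\{z_v: X_j(v)\subseteq C\}$ with $C=X_\P(u)\cap X_j$ is correct, and the empty-$C$ cases do drop out because every cluster of $\T_j$ is nonempty (leaves are labeled). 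The ``crux'' equivalence $X_j(v)\subseteq X_j(w)\iff w\le_{T_j} v$ for unlabeled $v$ is exactly the point where the degree condition on unlabeled nodes is needed, and your sketch (a sibling subtree of the child containing $w$ contributes a label of $X_j(v)\setminus X_j(w)$ when $v<_{T_j}w$; disjointness of clusters of incomparable nodes otherwise) closes it; I would only ask you to write out that case analysis explicitly rather than leaving it as ``delicate,'' since it is the one step that could fail for trees whose unlabeled nodes were allowed a single child. Your closing remark that the restriction $\T|X_\P$ is what literally satisfies the definition of an agreement tree for $\P$ is a fair reading of the lemma's slightly informal ``Further'' clause.
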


From this point forward, we make the following assumption. 

\begin{assumption}\label{as:fullySingular}
For each $i \in [k]$, $\T_i$ is fully and singularly labeled.
\end{assumption}

By Lemma \ref{lm:fullyL}, no generality is lost in assuming that all trees in $\P$ are fully labeled. 
The assumption that the trees are singularly labeled is inessential; it is only for clarity.  Note that, even with the latter assumption, a tree that agrees with $\P$ is not necessarily singularly labeled.  Figure \ref{fig:tree} illustrates this fact.

\begin{lemma}\label{lemma:profileEmbedding}
If profile $\P$ agrees, then $\P$ has an agreement tree $\T = (T,\phi)$ such that $\phi^{-1}(v) \neq \emptyset$ for each node $v \in V(T)$. 
\end{lemma}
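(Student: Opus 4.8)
The plan is to start from an arbitrary agreement tree $\T = (T,\phi)$ for $\P$, which exists because $\P$ agrees, and to transform it into a fully labeled agreement tree by repeatedly eliminating unlabeled nodes. Each elimination is an edge contraction that discards a single cluster from $\Cl(\T)$ while preserving agreement. Recall that, by the $X$-tree degree condition, an unlabeled node has at least two children (at least three if it is the root), so ``eliminating'' such a node $v$ means deleting the cluster $X(v)$ and reattaching the children of $v$ to $\parent_T(v)$.

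The key step is the claim: \emph{if $v$ is an unlabeled node with $X(v) \cap X_i \neq \emptyset$ for some $i$, then $v$ has a child $c$ with $X(c) \cap X_i = X(v) \cap X_i$.} To prove it, I would invoke Assumption \ref{as:fullySingular}: since $\T$ agrees with $\T_i$, the set $X(v) \cap X_i$ is a nonempty cluster of the fully labeled tree $\T_i$, hence is the cluster of some node $w$ of $T_i$, and full labeling gives $w$ a label $\ell$. Because $\ell \in X(v)$ and $v$ is unlabeled, $\phi(\ell)$ lies strictly below $v$, so $\ell \in X(c)$ for some child $c$ of $v$. Now $X(c) \cap X_i \subseteq X(v) \cap X_i$, and $X(c) \cap X_i$ is itself a cluster of $\T_i$ containing $\ell$; since $\ell$ is a label of $w$, the cluster of $w$ is the smallest cluster of $\T_i$ containing $\ell$, forcing $X(c) \cap X_i = X(v) \cap X_i$. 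The consequence is that deleting $X(v)$ changes no restriction $\Cl(\T \mid X_i)$: whatever $X(v)$ contributed is already contributed by a child of $v$.

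With this claim in hand, I would first contract every unlabeled \emph{non-root} node into its parent, one at a time. Each contraction removes one cluster, strictly decreases the number of unlabeled nodes, never turns a labeled node unlabeled, and—by the claim—keeps the tree an agreement tree. After this phase only the root $r$ may remain unlabeled, in which case it has at least three children. Applying the claim with $v = r$ and $X(r) = X_\P$ gives, for each $i$, a child $c$ with $X(c) \supseteq X_i$; since the clusters of distinct children are disjoint, every $X_i$ sits inside exactly one child subtree, so for each $i$ and each child $c$ we have $X(c) \cap X_i \in \{X_i, \emptyset\}$. Choosing any child $c_1$ (now labeled) and contracting it into $r$ deletes the cluster $X(c_1)$, but since $X(c_1) \cap X_i$ is either $X_i$—still realized by the root cluster $X_\P \cap X_i = X_i$—or empty, agreement is preserved while $r$ inherits a label. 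The resulting tree is a fully labeled agreement tree.

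The main obstacle is the root. For a non-root unlabeled node the deleted cluster restricts, within each $X_i$, to a set already realized by a child, so its removal is harmless; but contracting a child into the root deletes a \emph{child's} cluster, which in general could be an essential proper sub-cluster of some $\T_i$. The crux is therefore to rule out such partial intersections at the top level, i.e.\ to show that $\P$ splits into subprofiles with pairwise disjoint taxon sets lying below the children of the root, so that each child's cluster restricts to either all of $X_i$ or nothing. This is exactly what the claim, applied to the root, delivers.
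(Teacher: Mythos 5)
Your proof is correct and follows essentially the same route as the paper: both rest on the observation that, for an unlabeled node $v$ and each $i$, all of $X_i \cap X_\P(v)$ lies in a single child subtree of $v$ (your key claim is equivalent to the Fact the paper states without proof), and both then eliminate unlabeled nodes by repeated edge contractions that preserve every restriction $\Cl(\T|X_i)$. The only difference is the direction of contraction --- the paper always merges a child \emph{into} the unlabeled node, which keeps the argument uniform and avoids the separate root case you handle at the end.
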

%
%

By Assumption \ref{as:fullySingular}, for each $i \in [k]$, there is a bijection between the labels in $X_i$ and the nodes of $V(T_i)$.  For this reason, we will often refer to nodes by their labels.  In particular, given a label $\ell \in X_i$, we write $X_i(\ell)$ to denote $X_i(\phi_i(\ell))$ (the cluster of $\T_i$ at the node labeled $\ell$), $\Ch_{T_i}(\ell)$ to denote $\phi_i(\Ch_{T_i}(\phi_i(\ell))$ (the labels of children of $\ell$ in $\T_i$), and, for $A \subseteq X_i$, $\Ch_{T_i}(A)$ to denote $\phi_i(\Ch_{T_i}(\phi_i(A))$. 


The following characterization of agreement generalizes a result in \cite{FBGSV2015}.

\begin{lemma}\label{lemma:embedding}
Let $\P$ be a profile and $\mathcal{T} = (T, \phi)$ be an $X_\P$-tree. Then, $\T$ is an agreement tree for $\P$ if and only if, for each $i \in [k]$, there exists a function $\phi_i : X_i \rightarrow V(T)$ such that for every label $a \in X_i$,
\begin{enumerate}[(E1)]
\item
$\phi_i(a) = \LCA_T(X_i(a))$,
\item
for each label $b \in \Ch_{T_i}(a)$, $\phi_i(a) <_T \phi_i(b)$, and 
\item
for every two distinct labels $b, c  \in \Ch_{T_i}(a)$, there exist distinct nodes $u, v \in \Ch_T(\phi_i(a))$ such that $\phi_i(b) \in X_\P(u)$ and $\phi_i(c) \in X_\P(v)$.
\end{enumerate}
\end{lemma}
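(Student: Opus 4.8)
The plan is to prove both implications by working with the concrete placement of each input label inside $\T$. Throughout, I read $\phi_i$ as the map sending each label $a \in X_i$ to the node of $T$ that carries it, so that condition (E1) asserts the substantive fact that each input label occupies the lowest common ancestor, in $T$, of its own cluster, and I read the relation ``$\phi_i(b)\in X_\P(u)$'' in (E3) as ``$\phi_i(b)$ lies in the subtree $T(u)$.'' The one computational fact I will use repeatedly is that the clusters of $\T|X_i$ are exactly the sets $X_\P(w)\cap X_i$ over $w \in V(T)$, and that, under agreement, the node of $\T|X_i$ realizing the cluster $X_i(a)$ is $\LCA_T(X_i(a))$; since the cluster set of a semi-labeled tree determines the tree together with its labeling, agreement of $\T$ with $\T_i$ is equivalent to $\Cl(\T_i) = \Cl(\T|X_i)$.

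For the forward direction, assume $\T$ agrees with $\T_i$. The smallest cluster of $\T_i$ containing a label $a$ is $X_i(a)$, hence the same holds in $\T|X_i$, so label $a$ sits in $\T|X_i$, and therefore in $T$, at $\LCA_T(X_i(a))$; this is (E1). For (E2) and (E3) I transport the structure of $\T_i$ across the isomorphism $\T_i \cong \T|X_i$: if $b$ is a child of $a$ then $X_i(b)\subsetneq X_i(a)$ are nested, forcing $\LCA_T(X_i(a)) <_T \LCA_T(X_i(b))$, which is (E2); and if $b,c$ are distinct children of $a$, the nodes carrying $X_i(b)$ and $X_i(c)$ lie in distinct child-subtrees of the node carrying $X_i(a)$ in $\T|X_i$, and these necessarily descend from distinct children of $\LCA_T(X_i(a))$ in $T$, which is (E3).

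For the reverse direction, assume (E1)--(E3). The heart of the argument is the claim that $X_\P(\phi_i(a))\cap X_i = X_i(a)$ for every $a\in V(T_i)$, which I prove by induction down $T_i$ from the root (where $X_i(\mathrm{root})=X_i$ makes the claim trivial); the inclusion $\supseteq$ is immediate from (E1). At a node $a$ with children $b_1,\dots,b_m$, condition (E2) places each $\phi_i(b_j)$ strictly below $\phi_i(a)$ and (E3) places them in pairwise-distinct child-subtrees $T(u_1),\dots,T(u_m)$ of $\phi_i(a)$, so the sets $X_\P(u_j)\cap X_i$ are pairwise-disjoint subsets of $X_\P(\phi_i(a))\cap X_i = X_i(a)$ by the inductive hypothesis. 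Squeezing $X_i(b_j) \subseteq X_\P(\phi_i(b_j))\cap X_i \subseteq X_\P(u_j)\cap X_i$ between the partition $X_i(a)\setminus\{a\} = \bigsqcup_j X_i(b_j)$ and $X_i(a)$ forces equality term by term, once I observe that the label $a$ itself cannot lie in any $T(u_j)$: by (E1) it sits at $\phi_i(a)$, the common parent of the $u_j$. This is the one place where (E1) is essential rather than (E2)--(E3) alone; it is what prevents a parent label from being absorbed into a child's subtree, and I expect it to be the crux.

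It remains to upgrade the claim to $\Cl(\T|X_i)=\Cl(\T_i)$. The inclusion $\Cl(\T_i)\subseteq\Cl(\T|X_i)$ is immediate from the claim. For the reverse, given any nonempty $S = X_\P(w)\cap X_i$ I may assume $w=\LCA_T(S)$ (replacing $w$ by $\LCA_T(S)$ leaves $S$ unchanged), and I take $a$ to be the lowest node of $T_i$ with $S\subseteq X_i(a)$; a short case analysis, using again that $a$'s label sits at $\phi_i(a)$ and that the $X_i(b_j)$ occupy disjoint child-subtrees, shows that either $\LCA_T(S)=\phi_i(a)$, giving $S=X_i(a)$, or $\LCA_T(S)$ lies strictly inside one $T(u_j)$, which would give $S\subseteq X_i(b_j)$ and contradict the minimality of $a$. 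Hence $S=X_i(a)\in\Cl(\T_i)$. Equality of the cluster sets for every $i$ is precisely the assertion that $\T$ is an agreement tree for $\P$.
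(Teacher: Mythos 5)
Your proof is correct, and your opening interpretive move matters: reading $\phi_i$ as the restriction of $\T$'s labeling map $\phi$ to $X_i$, so that (E1) is the substantive assertion $\phi(a)=\LCA_T(X_i(a))$, is exactly the reading under which the lemma holds and is the one the paper's own proof silently adopts (it manipulates $X_\P(\phi(a))$, not $X_\P$ of an abstractly posited $\phi_i(a)$); with $\phi_i$ pinned down only by (E1) the equivalence would fail when a label sits strictly below the LCA of its own cluster, so your remark that (E1) is the crux that keeps $a$ out of the child subtrees is on target. The forward direction matches the paper's. In the reverse direction both arguments hinge on the identity $X_\P(\phi_i(a))\cap X_i=X_i(a)$, but you reach it differently: the paper takes a putative extra label $b$, forms $c=\LCA_{T_i}(X_i(a)\cup\{b\})$, and contradicts (E2), whereas you run a top-down induction using (E3) to place the children's images in pairwise disjoint child subtrees and squeeze $\bigcup_j X_i(b_j)$ against $X_i(a)\setminus\{a\}$. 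Your route costs an induction but yields the stronger per-child identity $X_\P(u_j)\cap X_i=X_i(b_j)$, which you then reuse to get $\Cl(\T|X_i)\subseteq\Cl(\T_i)$ by a case analysis on $\LCA_T(S)$, where the paper instead exhibits a violating cluster against (E3). The only spots terser than ideal are in the forward direction (strictness in (E2), and why the nodes carrying $X_i(b)$ and $X_i(c)$ descend from distinct children of $\LCA_T(X_i(a))$ in $T$: otherwise $X_\P(u)\cap X_i$ would be a cluster of $\T|X_i$ containing $X_i(b)\cup X_i(c)$ but not $a$, hence absent from $\Cl(\T_i)$); this matches the paper's own level of terseness, so it is polish, not a gap.
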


We refer to a function $\phi_i$ satisfying conditions (E1)--(E3) of Lemma \ref{lemma:embedding} as a \emph{topological embedding} of $\T_i$ into $\T$. Observe that, by transitivity, condition (E2) implies that, for any $a, b \in X_i$, if $a <_{T_i} b$, then $\phi_i(a) <_T \phi_i(b)$.

\section{Positions in a Profile\label{sec:positions}}

A \emph{position} in a profile $\P$ is a tuple $\pos = (\pos_1, \pos_2, \dots, \pos_k)$ where, for each $i \in [k]$, either $\pos_i = \emptyset$ or $\pos_i = \{\ell\}$, for some $\ell \in X_i$. 
Note that the definition of a position allows for the possibility that there exist $i, j \in [k]$, $i \neq j$, such that $\ell \in \pos_i$, but $\ell \notin \pos_{j}$, even if $\ell \in X_i$ and $\ell \in X_{j}$. 
At any given point during its execution, our agreement algorithm focuses on testing the agreement of the subprofile of \P\ determined by the subtrees associated with a specific position.  

For a position $\pos$ in \P, let $X_\P(\pos)$ denote the set of labels $\bigcup_{i \in [k]} X_i(\pos_i)$.   A label $\ell \in X_\P(\pos)$ is \emph{exposed in \pos} if $\pos_i = \{\ell\}$ for every $i \in [k]$ such that $\ell \in X_i(\pos)$. 
We say that position $\pos$ \emph{has an agreement tree} if $\P|X_\P(\pos)$ has an agreement tree.

A position \pos\ in \P\ is \emph{valid} if
$X_i(\pos_i) = X_\P(\pos) \cap X_i$,
for each $i \in [k]$.
The \emph{initial position} for $\P$ is the position $\initPos$, where, for each $i \in [k]$, $\initPos_i$ is a singleton set consisting of the label of $r(T_i)$ (i.e., $\initPos_i = \phi_i^{-1}(r(T_i))$.  Clearly, $\initPos$ is a valid position. 

\begin{lemma}\label{lemma:existenceofsupertree}
A profile $\P$ has an agreement tree if and only if there is an agreement tree for every valid position $\pos$ in $\P$.
\end{lemma}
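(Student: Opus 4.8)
The plan is to prove the two implications separately, and in both cases to reduce to a single already-established tool: for the ``only if'' direction, Lemma~\ref{lm:compatSubprofile}; for the ``if'' direction, the special role played by the initial position $\initPos$. Neither direction requires building a tree from scratch, so the argument is essentially a matter of invoking the right fact at the right place.

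For the forward implication, I would assume $\P$ has an agreement tree $\T$ and fix an arbitrary valid position $\pos$. The goal is to exhibit an agreement tree for $\P | X_\P(\pos)$. I would simply invoke Lemma~\ref{lm:compatSubprofile} with $Y = X_\P(\pos)$: since $X_\P(\pos) \subseteq X_\P$, the lemma guarantees that $\T | X_\P(\pos)$ is an agreement tree for $\P | X_\P(\pos)$, so $\pos$ has an agreement tree by definition. Note that this argument never uses the validity of $\pos$; it holds for \emph{every} position, which is more than enough, since the statement quantifies only over the valid positions.

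For the backward implication, I would assume that every valid position has an agreement tree and aim to produce an agreement tree for $\P$ itself. The key observation is that the initial position $\initPos$ is valid (as already noted in the text) and satisfies $X_\P(\initPos) = \bigcup_{i \in [k]} X_i(\initPos_i) = \bigcup_{i \in [k]} X_i = X_\P$, using that $X_i(\initPos_i)$ is the cluster at the root of $T_i$, which is all of $X_i$. I would then check that restricting $\P$ to the whole label set leaves it unchanged, i.e.\ $\P | X_\P = \P$: for each $i \in [k]$ we have $X_\P \cap X_i = X_i$, and $\T_i | X_i = \T_i$, since $\Cl(\T_i | X_i) = \{W \cap X_i : W \in \Cl(\T_i)\} = \Cl(\T_i)$. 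Hence $\P | X_\P(\initPos) = \P$, and the agreement tree guaranteed for $\initPos$ by the hypothesis is an agreement tree for $\P$.

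I do not expect a genuine obstacle; the content of the lemma is essentially bookkeeping that connects the ``position'' formalism to the global agreement question. The only point needing a moment's care is the equality $\P | X_\P(\initPos) = \P$, which rests on the two identities $X_i(\initPos_i) = X_i$ (the cluster at the root is the entire label set) and $\T_i | X_i = \T_i$ (restriction to the full label set is the identity). Once these are in place, the forward direction is immediate from Lemma~\ref{lm:compatSubprofile} and the backward direction follows by instantiating the hypothesis at $\initPos$.
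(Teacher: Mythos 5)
Your proposal is correct and follows essentially the same route as the paper: the forward direction is an immediate application of Lemma~\ref{lm:compatSubprofile} with $Y = X_\P(\pos)$, and the backward direction instantiates the hypothesis at the initial position $\initPos$, using $X_\P(\initPos) = X_\P$. Your extra verification that $\P\,|\,X_\P = \P$ is a detail the paper leaves implicit, but it is the same argument.
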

%


\paragraph{Decomposing a position\label{sec:DecPos}.}

In what follows, $\pos$ denotes a valid position in $\P$.  For each $i \in [k]$ such that $\pos_i \neq \emptyset$, let $\ell_i \in X_i$ denote the single label in $\pos_i$.
Let  $\Ch_\P(\pos)$ denote the set of all children of some label in \pos; i.e., $\Ch_\P(\pos) = 
 \bigcup_{i \in [k]} \Ch_{T_i}(\pos_i)$.

Let $\pos$ be a valid position in $\P$.  
A \emph{good decomposition} of \pos\ is a pair $(S, \Pi)$, where $S$ is a subset of the exposed labels in $\bigcup_{i \in \pos_i} \pos_i$ and $\Pi = \{\pos^{(1)}, \pos^{(2)}, \dots, \pos^{(d)}\}$ is a collection of valid positions such that 
\begin{enumerate}[(D1)]
\item \label{itemD1}
$S \cup \bigcup_{j \in [d]} X_\P(\pos^{(j)}) = X_\P(\pos)$ and
$S \cap \bigcup_{j \in [d]} X_\P(\pos^{(j)}) = \emptyset$, and
\item  \label{itemD3}
$X_\P(\pos^{(p)}) \cap X_\P(\pos^{(q)}) = \emptyset$, for all $p, q \in [d]$ such that $p \neq q$.  
\end{enumerate}
Note that we allow  $S$ or $\Pi$ to be empty.  
We refer to the labels in $S$ as \emph{semi-universal labels} and to the positions in $\Pi$ as \emph{successor positions} of \pos.  The next result is central to our agreement algorithm.

\begin{lemma}\label{lemma:iffstatement}
Let $\pos$ be a valid  position in a profile $\P$.  
Then,
$\pos$ has an agreement tree
if and only if there exists a good decomposition $(S, \Pi)$ of $\pos$ such that $S \neq \emptyset$ and, for each position $\pi' \in \Pi$, $\pos'$ has an agreement tree.  If such a good decomposition exists, then $\pos$ has an agreement tree $\T = (T, \phi)$ where $\phi^{-1}(r(T)) = S$.
\end{lemma}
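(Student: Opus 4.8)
The plan is to prove both directions of the biconditional, using the characterization of agreement trees in terms of topological embeddings (Lemma \ref{lemma:embedding}) and the decomposability of valid positions.

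For the \emph{forward} direction, suppose $\pos$ has an agreement tree; by Lemma \ref{lemma:profileEmbedding} we may take an agreement tree $\T = (T,\phi)$ for $\P|X_\P(\pos)$ in which every node is labeled. First I would identify the set $S$ with the labels at the root $r(T)$, i.e.\ set $S = \phi^{-1}(r(T))$, and show that these labels are exactly the exposed semi-universal labels of $\pos$. The key claim here is that a label $\ell$ sits at $r(T)$ precisely when its cluster $X_i(\ell)$ equals all of $X_\P(\pos)\cap X_i$ in every tree containing it; this is where validity of $\pos$ and condition (E1) (which forces $\phi_i(\ell)=\LCA_T(X_i(\ell))$) do the work. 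Next I would let $u_1,\dots,u_d$ be the children of $r(T)$ and define, for each $j$, a successor position $\pos^{(j)}$ by descending one level: for each $i$, set $\pos^{(j)}_i$ to be the (singleton or empty) set of children of $\ell_i$ in $T_i$ whose embedded clusters fall into the subtree $T(u_j)$. Conditions (D\ref{itemD1}) and (D\ref{itemD3}) then follow because the subtrees $T(u_1),\dots,T(u_d)$ partition $X_\P(\pos)\setminus S$ (every non-root label lies under exactly one child of the root), and Lemma \ref{lm:compatSubprofile} guarantees that each $\T|X_\P(\pos^{(j)})$ is an agreement tree for $\P|X_\P(\pos^{(j)})$, so each $\pos^{(j)}$ has an agreement tree. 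The validity of each $\pos^{(j)}$ must be checked separately, again from (E1) applied within the subtree.

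For the \emph{reverse} direction, suppose a good decomposition $(S,\Pi)$ with $S\neq\emptyset$ exists and each $\pos^{(j)}\in\Pi$ has an agreement tree $\T^{(j)}$. I would construct $\T$ by creating a fresh root $r$, labeling it with exactly the set $S$, and attaching the roots of the $\T^{(j)}$ as the children of $r$. By (D\ref{itemD1}) and (D\ref{itemD3}) the label sets partition correctly, so $\T$ is a well-defined $X_\P(\pos)$-tree with $\phi^{-1}(r)=S$. The crux is verifying that $\T$ agrees with each $\T_i|(X_\P(\pos)\cap X_i)$; I would do this by exhibiting a topological embedding $\phi_i$ and checking (E1)--(E3). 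Conditions (E1) and (E2) are inherited from the embeddings into the $\T^{(j)}$ together with the fact that the semi-universal labels of $S$ map to the new root, which is an ancestor of everything. The essential point is (E3): for two distinct children $b,c$ of $\ell_i$ in $T_i$, I must produce distinct children of $\phi_i(\ell_i)$ separating them. When $b$ and $c$ land in different successor positions this is immediate from the disjointness (D\ref{itemD3}); when they land in the \emph{same} $\pos^{(j)}$, separation is inherited from the embedding into $\T^{(j)}$.

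I expect the main obstacle to be the bookkeeping around condition (E3) in the reverse direction, specifically the case analysis for where the label $\ell_i$ itself is embedded. If $\ell_i\in S$, then $\phi_i(\ell_i)=r$ and its children's clusters must be distributed among the successor positions so that distinct children of $\ell_i$ are separated by distinct children of $r$; this requires arguing that no two children of $\ell_i$ collapse into the same successor subtree unless they were already separated there. If $\ell_i\notin S$, then $\ell_i$ lives entirely within one successor position and (E3) reduces to the corresponding property of that subtree's embedding, but I must confirm that passing to the restriction $\P|X_\P(\pos)$ did not merge clusters in a way that violates the separation. Handling the interaction between the semi-universal labels and the multifurcation structure---the feature that distinguishes agreement from compatibility---is the delicate step, and I would treat it by carefully tracking which children of a multifurcating node are routed to which successor position and invoking validity of $\pos$ to ensure the routing is consistent across all trees sharing the label.
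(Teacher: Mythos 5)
Your overall architecture is the paper's: in the forward direction take $S = \phi^{-1}(r(T))$ and one successor position per child of $r(T)$, and in the reverse direction graft the agreement trees for the successor positions under a new root labeled by $S$ and verify (E1)--(E3) via Lemma \ref{lemma:embedding}. But two steps as you describe them would fail. First, your definition of $\pos^{(j)}_i$ as ``the (singleton or empty) set of children of $\ell_i$ whose embedded clusters fall into $T(u_j)$'' is only right when $\ell_i \in S$. If $\ell_i \notin S$ --- which happens whenever $\LCA_T(X_i(\ell_i))$ lies strictly below $r(T)$, e.g.\ when $\T_i$ covers only part of $X_\P(\pos)$ --- then $\ell_i$ and \emph{all} of its children sit inside a single subtree $T(u_j)$. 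Your rule then drops $\ell_i$ from $S \cup \bigcup_j X_\P(\pos^{(j)})$, violating (D1), and can put two or more children of $\ell_i$ into $\pos^{(j)}_i$, so $\pos^{(j)}$ is not even a position. The paper instead takes $\pos^{(j)}_i$ to be the label of the root of $\T_i|(X_i \cap X_\P(v_j))$, which is the appropriate child of $\ell_i$ when $\ell_i \in S$ and is $\ell_i$ itself otherwise; you need exactly the case split of Equation \eqref{eqn:posDefn}.

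Second, in the reverse direction your handling of (E3) when $\ell_i \in S$ is wrong in substance. If two distinct children $b, c$ of $\ell_i$ landed in the \emph{same} successor position $\pos^{(j)}$, then (E3) at the new root would fail outright: both $\phi_i(b)$ and $\phi_i(c)$ would lie under the single child $v_j$, and no separation \emph{inside} $\T^{(j)}$ can repair that, contrary to your claim that it is ``inherited from the embedding into $\T^{(j)}$.'' The correct argument is that this case cannot occur. If $b, c \in X_\P(\pos^{(j)})$, then since $\pos^{(j)}$ is a valid position, $X_i(\pos^{(j)}_i) = X_\P(\pos^{(j)}) \cap X_i$ contains both $b$ and $c$, so $\pos^{(j)}_i = \{m\}$ for some common ancestor $m$ of $b$ and $c$ in $T_i$; any such $m$ is an ancestor of $\ell_i$, hence $\ell_i \in X_i(m) \subseteq X_\P(\pos^{(j)})$, contradicting the disjointness of $S$ from $\bigcup_j X_\P(\pos^{(j)})$ required by (D1). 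The paper asserts the conclusion (``each of $\ell_i$'s children must be in a distinct subtree of $u$'') without spelling this out, but your proposed substitute --- deferring to the sub-embedding --- does not close the case, and your hedge ``unless they were already separated there'' is not a condition under which (E3) at the root could hold.
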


\paragraph{Good partitions.}
To find a good decomposition of a position $\pos$, it is convenient to work with partitions of $\Ch_\P(\pos)$. (Recall that a \emph{partition} of a set $Y$ is a collection $\Gamma$ of nonempty subsets of $Y$ such that every element $x \in Y$ is in exactly one set in $\Gamma$.)
A good decomposition $(S, \Pi)$, where $\Pi = \{\pos^{(j)}\}_{j \in [d]}$ defines a partition $\Gamma$ of the set $\Ch_\P(\pos)$ where, for any $a, b \in \Ch_\P(\pos)$, $a$ and $b$ are in the same set of $\Gamma$ if and only if there exists $j \in [d]$ such that $a, b \in X_\P(\pi^{(j)})$.  We refer to $\Gamma$ as the \emph{partition of $\Ch_\P(\pos)$ associated with $(S, \Pi)$}.  Next, we show that, conversely, certain partitions of $\Ch_\P(\pos)$ define good decompositions of \pos.


Set $A \subseteq \Ch_\P(\pos)$ is \emph{nice} with respect to a subset $S$ of the exposed labels in $\pos$ if, for each $i \in [k]$ and each label $\ell \in \bigcup_{i \in [k]} \pos_i$ such that $\Ch_\P(\ell) \cap A \neq \emptyset$,
\begin{enumerate}[(N1)]
\item 
if $\ell \in S$ and each $i \in [k]$ such that $\ell \in \pos_i$, then $|\Ch_{T_i}(\ell) \cap A| = 1$, and
\item 
if  $\ell \not\in S$, then $\Ch_\P(\ell) \subseteq X_\P(A)$. 
\end{enumerate}

Suppose $A$ is a nice set.  The \emph{position associated with $A$} is the position $\pos^A$, where, for each $i \in [k]$, $\pos_i^A$ is defined as follows.  If $\pos_i = \emptyset$, then $\pos_i^A = \emptyset$.  Otherwise, let $\ell$ be the single element in $\pos_i$.  Then,
\begin{equation}\label{eqn:posDefn}
\pos^A_i =
 \begin{cases}

\emptyset & \text{if $\Ch_{T_i}(\ell) \cap A = \emptyset$,} \\

\Ch_{T_i}(\ell) \cap A & \text{if $\ell \in S$, and} \\

\pos_i & \text{if $\ell \notin S$.}

 \end{cases}
 \end{equation}
A partition $\Gamma$ of $\Ch_\P(\pos)$ is \emph{good with respect to $S$} if each set $A \in \Gamma$ is nice with respect to $S$ and, for every two distinct sets $A, B \in \Gamma$, $X_\P(\pos^A) \cap X_\P(\pos^B) = \emptyset$.

\begin{lemma}\label{lemma:DecPart}
There is a bijection between good decompositions of $\pi$ and good partitions of $\Ch_\P(\pos)$.  That is, the following statements hold.
\begin{enumerate}[(i)]
\item Suppose $(S, \Pi)$ is a good decomposition of $\pos$.  Let $(S,\Gamma)$ be the partition of $\Ch_\P(\pos)$ associated with $(S,\Pi)$.  Then, $(S,\Gamma)$ is a a good partition of $\Ch_\P(\pos)$.
\item Suppose $(S,\Gamma)$ is a good partition of $\Ch_\P(\pos)$.  Let $\Pi = \{\pos^A: A \in \Gamma\}$.  Then, $(S,\Pi)$, a good decomposition of $\pos$.
\end{enumerate}
\end{lemma}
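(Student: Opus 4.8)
The plan is to regard both directions as an unwinding of the definitions of \emph{nice set}, \emph{good partition} and \emph{good decomposition}, with a single substantive step: showing that the successor positions of \emph{any} good decomposition already have the canonical shape produced by the map $A\mapsto\pos^A$. Throughout I will use that $\pos$ and every successor position are valid, so that $X_i(\pos_i)=X_\P(\pos)\cap X_i$ and the analogous identity for each $\pos^{(j)}$ hold; this lets me pass freely between a position's coordinates and the global label sets $X_\P(\cdot)$. I would first record a preliminary fact that makes the correspondence well defined: $\Ch_\P(\pos)\cap S=\emptyset$. Indeed, if a child label $a\in\Ch_{T_j}(\pos_j)$ equalled an exposed label $\ell\in S$, then $\ell=a$ would be a proper descendant of the position label in $T_j$, so $\ell\in X_j(\pos)$ while $\pos_j\neq\{\ell\}$, contradicting that $\ell$ is exposed. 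Combined with (D1), (D2), and $\Ch_\P(\pos)\subseteq X_\P(\pos)$, this shows each $a\in\Ch_\P(\pos)$ lies in exactly one $X_\P(\pos^{(j)})$, so the set $\Gamma$ in statement (i) is genuinely a partition of $\Ch_\P(\pos)$.

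For statement (ii) I would verify the three requirements of a good decomposition for $(S,\Pi)$ with $\Pi=\{\pos^A:A\in\Gamma\}$. Condition (D2) is immediate: it is exactly the disjointness clause in the definition of a good partition. Validity of each $\pos^A$ and condition (D1) are read off from (N1)--(N2): when the label $\ell$ of $\pos_i$ lies in $S$, (N1) forces $\pos^A_i=\Ch_{T_i}(\ell)\cap A$ to be a single child, so $\pos^A$ is a legitimate position; when $\ell\notin S$, (N2) keeps the entire cluster $X_i(\ell)$ inside the single part meeting $\Ch_\P(\ell)$, so that the sets $X_\P(\pos^A)$, together with $S$, partition $X_\P(\pos)$ while each $\pos^A$ remains valid. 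The only care needed is bookkeeping for labels occurring in several trees, where one must check that the coordinatewise clusters reassemble into the global sets $X_\P(\pos^A)$; this is where validity of $\pos$ is used.

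For statement (i) and the bijection the core is the pair of identities $\Ch_\P(\pos)\cap X_\P(\pos^A)=A$ for nice $A$, and $\pos^{(j)}=\pos^{A_j}$ with $A_j=\Ch_\P(\pos)\cap X_\P(\pos^{(j)})$. The first is easy: $A\subseteq X_\P(\pos^A)$ is read off the definition of $\pos^A$, and conversely any $c\in\Ch_\P(\pos)$ lies in a unique part $B$ with $c\in X_\P(\pos^B)$, so disjointness of distinct $X_\P(\pos^A),X_\P(\pos^B)$ forces $B=A$.

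The identity $\pos^{(j)}=\pos^{A_j}$ is the main obstacle, since the definition of good decomposition does not a priori restrict the depth of $\pos^{(j)}_i$. I expect to prove that a nonempty $\pos^{(j)}_i=\{m\}$ is always a child of the label $\ell_i$ of $\pos_i$ (when $\ell_i\in S$) or equals $\pos_i$ (when $\ell_i\notin S$): in the first case $\ell_i\notin X_\P(\pos^{(j)})$ by (D1), so $m$ is a proper descendant of $\ell_i$ with a parent $p$ satisfying $\ell_i\le_{T_i}p<_{T_i}m$; if $p\neq\ell_i$ then $p\in X_\P(\pos)$ forces $p\in S$ or $p\in X_\P(\pos^{(j')})$ for some $j'$, and exposedness of $S$ excludes the first, while the second makes the label of $\pos^{(j')}_i$ a proper ancestor of $m$, whence $m\in X_\P(\pos^{(j)})\cap X_\P(\pos^{(j')})$ and (D2) gives $j=j'$, so $\{m\}=\pos^{(j)}_i$ has its label both equal to and a proper ancestor of $m$, a contradiction. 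A symmetric, shorter argument (using that $\ell_i$ is the top of $X_i(\pos)$) handles the case $\ell_i\notin S$. Granting this canonical shape, niceness of each $A_j$ and the equality $\pos^{A_j}=\pos^{(j)}$ follow coordinatewise, and the two constructions are seen to be mutually inverse, yielding the bijection. The one remaining subtlety is a position label that is a childless leaf in some tree; I would dispose of it by noting that such a label contributes to a part, and to a coordinate of the associated position, only through its occurrences in other trees, so it creates neither a spurious nor a missing part.
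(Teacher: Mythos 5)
Your proposal is correct and takes essentially the same route as the paper's proof: both directions are an unwinding of the definitions, checking (N1)--(N2) for part (i) and (D1)--(D2) for part (ii). You supply considerably more detail than the paper, which merely asserts the corresponding facts; in particular, your identities $\Ch_\P(\pos)\cap X_\P(\pos^A)=A$ and $\pos^{(j)}=\pos^{A_j}$ (i.e., that every good decomposition's successor positions already have the canonical shape of Equation \eqref{eqn:posDefn}) are exactly what is needed to see that the two maps are mutually inverse --- the part of the ``bijection'' claim the paper's proof leaves implicit --- and your handling of the edge cases (a child label coinciding with an exposed label, and a non-exposed position label that is a leaf in some tree) is sound.
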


%

We refer to the good partition $(S,\Gamma)$ of $\Ch_\P(\pos)$ obtained from a good decomposition $(S,\Pi)$ of $\pos$, as described in Lemma \ref{lemma:DecPart} (i), as the \emph{good partition of $\Ch_\P(\pos)$ associated with  $(S,\Pi)$}. Likewise,
we refer to the good decomposition $(S,\Pi)$ of \pos\ obtained from a good partition $(S,\Gamma)$ of $\Ch_\P(\pos)$, as described in Lemma \ref{lemma:DecPart} (ii), as the \emph{good decomposition of $\Ch_\P(\pos)$ associated with $(S,\Gamma)$}.

Let  $(S,\Gamma), (S',\Gamma')$ be good partitions of $\Ch_\P(\pos)$. We say that $(S,\Gamma)$ is \textit{finer} than $(S',\Gamma')$, denoted $(S,\Gamma) \sqsubseteq (S',\Gamma')$, if and only if, $S \supseteq S'$ and, for every $A \in \Gamma$, there exists an $A' \in \Gamma'$ such that $A \subseteq A'$. We write $(S,\Gamma) \sqsubset (S',\Gamma')$ to denote that $(S,\Gamma) \sqsubseteq (S',\Gamma')$ and $(S,\Gamma) \neq (S',\Gamma')$.  
We say that a partition $(S,\Gamma)$ of $\Ch_\P(\pos)$ is \emph{minimal} if there does not exist another partition $(S',\Gamma')$ of $\Ch_\P(\pos)$ such that $(S',\Gamma') \sqsubset (S,\Gamma)$.

\begin{lemma}\label{lemma:MinimalPart}
Let  $\pos$ be a valid position in a profile \P.  Then, the minimal good partition of $\Ch_\P(\pos)$   is unique.
\end{lemma}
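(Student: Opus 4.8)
The plan is to show that the good partitions of $\Ch_\P(\pos)$, ordered by $\sqsubseteq$, are closed under a natural meet operation, from which uniqueness of the minimal element follows formally. The meet of two good partitions $(S_1,\Gamma_1)$ and $(S_2,\Gamma_2)$ is the pair $(S_1\cup S_2,\Gamma_{12})$, where $\Gamma_{12}=\{A\cap B: A\in\Gamma_1,\ B\in\Gamma_2,\ A\cap B\neq\emptyset\}$ is the common refinement of $\Gamma_1$ and $\Gamma_2$. Since $S_1\cup S_2\supseteq S_1$ and every block of $\Gamma_{12}$ sits inside a block of $\Gamma_1$, we have $(S_1\cup S_2,\Gamma_{12})\sqsubseteq(S_1,\Gamma_1)$, and symmetrically $(S_1\cup S_2,\Gamma_{12})\sqsubseteq(S_2,\Gamma_2)$. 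Granting that this meet is again a good partition, the lemma is immediate: if $(S_1,\Gamma_1)$ and $(S_2,\Gamma_2)$ are both minimal, their meet is a good partition lying $\sqsubseteq$ both, so minimality forces the meet to equal each of them; as $\sqsubseteq$ is antisymmetric, $(S_1,\Gamma_1)=(S_2,\Gamma_2)$.

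The heart of the argument is thus to verify that $(S_1\cup S_2,\Gamma_{12})$ is good, i.e.\ that each block $C=A\cap B$ is nice with respect to $S'':=S_1\cup S_2$ and that the successor clusters $X_\P(\pos^C)$ are pairwise disjoint. For niceness, fix $i$ with $\pos_i=\{\ell\}$ and a block $C$ meeting $\Ch_{T_i}(\ell)$. If $\ell\in S''$, say $\ell\in S_1$, then the block $A\in\Gamma_1$ containing $C$ is nice with respect to $S_1$, so it contains exactly one child of $\ell$ in tree $i$; since $\emptyset\neq C\cap\Ch_{T_i}(\ell)\subseteq A\cap\Ch_{T_i}(\ell)$, the block $C$ inherits this property, giving (N1). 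If $\ell\notin S''$, I would first record the auxiliary fact that in any good partition in which $\ell$ is not semi-universal, all children of $\ell$ lie in a single block: two blocks each meeting $\Ch_\P(\ell)$ would both have $\ell$ in their successor position (third case of Equation~(\ref{eqn:posDefn})), and hence overlapping clusters, contradicting goodness. Applying this to $\Gamma_1$ and to $\Gamma_2$ confines all of $\Ch_\P(\ell)$ to a single block of $\Gamma_{12}$, which yields (N2).

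Disjointness of the successor clusters reduces to the containment $X_\P(\pos^{A\cap B})\subseteq X_\P(\pos^A)$, proved tree by tree from Equation~(\ref{eqn:posDefn}). For each $i$ with $\pos_i=\{\ell\}$, the tree-$i$ contribution to $X_\P(\pos^{A\cap B})$ is empty, or a subtree cluster $X_i(c)$ for the unique child $c\in A\cap B\cap\Ch_{T_i}(\ell)$ when $\ell\in S''$, or the whole cluster $X_i(\ell)$ when $\ell\notin S''$; using $c\in A$ and $X_i(c)\subseteq X_i(\ell)$, each of these is contained in the tree-$i$ contribution to $X_\P(\pos^A)$, so summing over $i$ gives the containment. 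The symmetric inclusion into $X_\P(\pos^B)$ holds by the same reasoning. Now if $C=A\cap B$ and $C'=A'\cap B'$ are distinct blocks of $\Gamma_{12}$, then $A\neq A'$ or $B\neq B'$; in the first case $X_\P(\pos^C)\subseteq X_\P(\pos^A)$ and $X_\P(\pos^{C'})\subseteq X_\P(\pos^{A'})$ are disjoint by goodness of $(S_1,\Gamma_1)$, and the second case is handled symmetrically via $(S_2,\Gamma_2)$.

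The main obstacle I anticipate is the niceness verification rather than the disjointness bound: one must ensure that simultaneously enlarging the semi-universal set to $S''$ and refining the partition to $\Gamma_{12}$ never produces a block that fails (N1) for a label that became semi-universal, nor splits the children of a non-semi-universal label across blocks. The two facts that make this work --- that refinement can only shrink $\Ch_{T_i}(\ell)\cap C$ relative to the coarser block, and that a non-semi-universal $\ell$ already has all its children in one block of each of $\Gamma_1,\Gamma_2$ --- are exactly the ingredients isolated above; once they are in place, the containment argument for disjointness is a routine case analysis on Equation~(\ref{eqn:posDefn}).
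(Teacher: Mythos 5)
Your proof is correct and follows essentially the same route as the paper's: both form the meet $(S_1\cup S_2,\{A\cap B\})$ of two good partitions, verify it is again a good partition, and conclude uniqueness of the minimal element from minimality and antisymmetry of $\sqsubseteq$. In fact you supply two details the paper leaves implicit --- the containment $X_\P(\pos^{A\cap B})\subseteq X_\P(\pos^{A})$ underlying the disjointness step, and the observation that a label $\ell\notin S$ has all of $\Ch_\P(\ell)$ inside a single block, which is what makes (N2) survive the common refinement.
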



We refer to the (unique) good decomposition $(S,\Pi)$ associated 
with the minimal good partition of  $\Ch_\P(\pos)$ as the \emph{maximal good decomposition of $\pos$}.

\begin{corollary}\label{cor:nonemptySemi}
Let $\pos$ be a valid position in a profile $\P$ and $(S,\Pi)$ be the maximal good decomposition of $\pos$. If $\pos$ has an agreement tree, then $S \neq \emptyset$.
\end{corollary}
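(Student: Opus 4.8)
The plan is to combine Lemma~\ref{lemma:iffstatement} with the uniqueness result of Lemma~\ref{lemma:MinimalPart} via a minimality argument. Suppose, toward a contradiction, that $\pos$ has an agreement tree but the maximal good decomposition $(S,\Pi)$ has $S = \emptyset$. Since $\pos$ has an agreement tree, Lemma~\ref{lemma:iffstatement} guarantees the existence of \emph{some} good decomposition $(S^*, \Pi^*)$ with $S^* \neq \emptyset$ in which each successor position also has an agreement tree. My goal is to show that the existence of such a nonempty-$S^*$ decomposition is incompatible with the minimal good partition having an empty semi-universal set. The bridge between these two facts is Lemma~\ref{lemma:DecPart}, which lets me pass freely between good decompositions of $\pos$ and good partitions of $\Ch_\P(\pos)$, together with the observation that the semi-universal set $S$ is precisely the first coordinate shared by a good partition $(S,\Gamma)$ and its associated good decomposition.

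The key step I would carry out is to understand how the refinement order $\sqsubseteq$ interacts with the semi-universal sets. By definition, if $(S,\Gamma) \sqsubseteq (S',\Gamma')$ then $S \supseteq S'$; in particular, the \emph{minimal} good partition has the \emph{largest} semi-universal set among all good partitions. So I would argue that if the minimal good partition $(S,\Gamma)$ has $S = \emptyset$, then \emph{every} good partition $(S',\Gamma')$ also has $S' \subseteq S = \emptyset$, hence $S' = \emptyset$. Translating back through Lemma~\ref{lemma:DecPart}, this says that every good decomposition of $\pos$ has an empty semi-universal set. But the good decomposition $(S^*,\Pi^*)$ supplied by Lemma~\ref{lemma:iffstatement} has $S^* \neq \emptyset$, a contradiction. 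This establishes $S \neq \emptyset$ for the maximal good decomposition.

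The main obstacle I anticipate is verifying the monotonicity claim carefully: I need to confirm that the minimal good partition is $\sqsubseteq$ every other good partition, not merely that \emph{no} partition is strictly finer than it. The definition only states that a minimal partition is one with no strict refinement below it; combined with Lemma~\ref{lemma:MinimalPart} (uniqueness of the minimal partition) I must check that $\sqsubseteq$ orders the good partitions so that the unique minimal element is a global lower bound. If $\sqsubseteq$ is a lattice-like order (or at least has a least element that is comparable to everything), uniqueness of a minimal element does give that it lies below every good partition, and then $S \supseteq S'$ forces the conclusion. If the order is not total, I would instead argue directly: since $S$ is maximal over all good partitions (as the minimal partition maximizes the semi-universal set under $\supseteq$), $S = \emptyset$ would force $S' = \emptyset$ for all good partitions, which again contradicts $S^* \neq \emptyset$. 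Either way, the crux is pinning down that the semi-universal set of the minimal good partition dominates that of every other good partition. The remainder is a routine application of the stated lemmas.
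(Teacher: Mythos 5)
Your proof is correct and follows essentially the same route as the paper's: assume $S=\emptyset$, use the uniqueness/minimality of the minimal good partition (Lemma~\ref{lemma:MinimalPart}) to conclude that every good decomposition of $\pos$ has an empty semi-universal set, and contradict Lemma~\ref{lemma:iffstatement}. The order-theoretic gap you flag is real but closes exactly as you suggest: the $\sqcap$ operation of Lemma~\ref{lemma:nicePartitionOnSqSet} shows good partitions are closed under meets, so the unique minimal good partition is a global lower bound under $\sqsubseteq$ and hence its $S$ contains the semi-universal set of every other good partition.
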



%
%
\section{Constructing an Agreement Tree\label{sec:constructingATs}}
%
 
\BuildAST (Algorithm \ref{alg:GenBuild}) takes as input a profile $\P$ on a set of labels $X$ and either returns an agreement tree for \P\ or reports that no such tree exists. \BuildAST assumes the availability of an algorithm \ComputeSuccessor that, given a valid position $\pos$ in $\P$, returns a maximal good decomposition $(S,\Pi)$ of $\pos$.

\begin{algorithm}[t]
\Fn(){\BuildAST{$\P$}}{
\SetAlgoLined
\SetNoFillComment
\DontPrintSemicolon
\SetAlgoLined
\KwData{A profile $\P = \{\T_1, \T_2, \dots , \T_k\}$ on a set of taxa $X$. }
\KwResult{Returns an agreement tree $\T$ for $\P$, if one exists; otherwise, returns \DISAGREEMENT.}
$\Q.\enqueue(\langle \initPos, \NULL\rangle)$ \label{algo:initQ}\;
\While{$\Q \neq \emptyset$\label{algo:beginWhile}}{
	$\langle \pos, \pred \rangle = Q.\dequeue()$\;
	$\langle S, \Pi \rangle = \ComputeSuccessor{$\pos$}$ \label{algo:successors}\;
	\If{$S= \emptyset$}{\Return \DISAGREEMENT\label{algo:absterminate}}
	Create a node $r(\pos)$\label{algo:createRoot}\;
	$r(\pos).\parent = \pred$\label{algo:setParent} \;
	\ForEach{$\ell \in S$\label{algo:forSetPhi}}{
		$\phi(\ell) = r(\pos)$\label{algo:setPhi}
	}
	\ForEach{$\pi' \in \Pi$\label{algo:forAddToQ}}{$Q.\enqueue(\langle \pos', r(\pos) \rangle)$\label{algo:addToQ}}
}
\Return $\T = (T, \phi)$, where $T$ is the tree with root $r(\initPos)$\label{algo:returnAST}\;
}
\vspace{\parsep}
\caption{Testing agreement\label{alg:GenBuild}}
\end{algorithm}

\BuildAST proceeds from the top down, starting from the initial position $\initPos$ of \P, attempting to construct an agreement tree for \P\ in a breath-first manner.  Like other algorithms based on breadth-first search, \BuildAST uses a queue, which stores pairs $\langle \pos, \pred \rangle$, where $\pos$ is a position in \P\ and \pred\ is a reference to the parent of the tree node (potentially) to be created for \pos. At the outset, the queue contains only the pair $\langle \initPos, \NULL \rangle$, corresponding to the root of the agreement tree, which has no parent.  

At each iteration of its outer \textbf{while} loop (lines \ref{algo:beginWhile}--\ref{algo:addToQ}),  \BuildAST extracts a pair $\langle \pos, \pred \rangle$ from its queue and invokes \ComputeSuccessor to obtain a maximal good decomposition $(S,\Pi)$ of $\pos$.  If $S = \emptyset$, then, by Corollary \ref{cor:nonemptySemi}, no agreement tree for \pos\ exists.  \BuildAST reports this fact (line \ref{algo:absterminate}) and terminates. 

If $S \neq \emptyset$, \BuildAST creates a tree node $r(\pos)$ for \pos; $r(\pos)$ is the tentative root for the agreement tree for \pos.  
By Lemma \ref{lemma:iffstatement}, if \pos\ has an agreement subtree, then it has an agreement tree where $\phi(\ell) = r(\pos)$. Lines \ref{algo:forSetPhi}--\ref{algo:setPhi} set up the mapping $\phi$ accordingly.  
Also by Lemma \ref{lemma:iffstatement}, if \pos\ has an agreement tree, then so does each position $\pos' \in \Pi$; furthermore, the roots of the trees for each position in $\Pi$ will be the children of $r(\pos)$.  Thus, \BuildAST adds $\langle \pos', r(\pos) \rangle$, for each $\pos' \in \Pi$ to the queue, to ensure that $\pos'$ is processed at a later iteration and that the root of the agreement tree constructed for $\pos '$ (if such a tree exists) is made to have $r(\pos)$ as its parent (lines \ref{algo:forAddToQ}--\ref{algo:addToQ}).  Therefore, if \BuildAST terminates without reporting \DISAGREEMENT, then the result returned in line \ref{algo:returnAST} is an agreement tree for \P.  \BuildAST indeed terminates, because there are only two possibilities at any given iteration: either the algorithm terminates reporting  \DISAGREEMENT\ or (since $S \neq \emptyset$) the maximal good decomposition $(S, \Pi)$ of \pos\ has the property that $\bigcup_{\pos' \in \Pi} X_\P(\pos')$ is a \emph{proper} subset of $X_\P(\pos)$. 
The number of iterations of \BuildAST cannot exceed the total number of nodes in an agreement tree for \P, which is $O(n)$.  Thus, we have the following result.

\begin{theorem}\label{theo:buildast}
Given a profile $\P = \{\T_1, \T_2, \dots, \T_k\}$,
\BuildAST returns an agreement tree $\T$ for $\P$, if such a tree exists; otherwise, \BuildAST returns \DISAGREEMENT.  The total number of iterations of \BuildAST's outer loop is $O(n)$.
\end{theorem}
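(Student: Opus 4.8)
The plan is to prove the two assertions separately: correctness (the algorithm outputs an agreement tree exactly when one exists) and the $O(n)$ bound on the number of outer-loop iterations. Throughout I would maintain the invariant that every position \BuildAST enqueues or dequeues is a \emph{valid} position in $\P$. This holds for $\initPos$, and it propagates because \ComputeSuccessor returns a maximal good decomposition $(S,\Pi)$ whose second component $\Pi$ consists, by the definition of a good decomposition, of valid positions; these are precisely the positions enqueued in lines \ref{algo:forAddToQ}--\ref{algo:addToQ}.

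For \textbf{soundness}, suppose the algorithm terminates without returning \DISAGREEMENT and outputs $\T=(T,\phi)$. I would prove, by induction on the height of a created node $r(\pos)$ in the final tree $T$, the stronger claim that the subtree $T(r(\pos))$ together with $\phi$ restricted to $X_\P(\pos)$ is an agreement tree for $\P | X_\P(\pos)$. The children of $r(\pos)$ in $T$ are exactly the nodes $r(\pos')$ with $\pos' \in \Pi$, since this is what the parent assignment in line \ref{algo:setParent} records; hence the induction hypothesis gives that every such $\pos'$ has an agreement tree. As the algorithm did not abort we have $S \neq \emptyset$, and $(S,\Pi)$ is a good decomposition, so the ``if'' direction of Lemma \ref{lemma:iffstatement} applies and yields an agreement tree for $\pos$ whose root is labeled $S$. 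The crux, which I expect to be the main technical obstacle, is to verify that the tree \BuildAST actually assembles---a fresh root labeled $S$ with the inductively constructed subtrees attached beneath it---coincides with the agreement tree produced in the proof of Lemma \ref{lemma:iffstatement}; this amounts to checking that gluing the per-successor topological embeddings under a common root relabeled by $S$ preserves conditions (E1)--(E3) of Lemma \ref{lemma:embedding}. The base case $\Pi=\emptyset$ forces $S=X_\P(\pos)$ and is covered by the same invocation of Lemma \ref{lemma:iffstatement}. Applying the claim at $r(\initPos)$ and using $X_\P(\initPos)=X_\P$ shows $\T$ is an agreement tree for $\P$.

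For \textbf{completeness}, I would handle both directions together. If $\P$ agrees, then by Lemma \ref{lemma:existenceofsupertree} every valid position of $\P$ has an agreement tree; in particular each dequeued position does, so by Corollary \ref{cor:nonemptySemi} the maximal good decomposition returned by \ComputeSuccessor satisfies $S \neq \emptyset$, the condition $S=\emptyset$ in line \ref{algo:absterminate} is never met, and the algorithm returns a tree rather than \DISAGREEMENT. Conversely, if $\P$ disagrees then the algorithm cannot return a tree, because by soundness any returned tree would be an agreement tree; hence it returns \DISAGREEMENT. Combined with soundness, this gives the full correctness statement.

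For the \textbf{iteration bound} I would use a potential argument on the queue. Let $\Phi=\sum_{\langle \pos,\cdot\rangle \in \Q} |X_\P(\pos)|$; initially $\Phi=|X_\P(\initPos)|=|X_\P|$. A non-aborting iteration removes a position $\pos$ and, by properties (D1) and (D2) of its good decomposition, enqueues successors whose label sets are pairwise disjoint with $\sum_{\pos'\in\Pi}|X_\P(\pos')| = |X_\P(\pos)|-|S|$; since $S\neq\emptyset$ this strictly decreases $\Phi$ by at least one, while an aborting iteration is necessarily the last. As $\Phi \ge 0$ throughout, the number of iterations is at most $|X_\P|+1$, which in particular proves termination. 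Equivalently, each non-aborting iteration creates exactly one node carrying the nonempty, pairwise-disjoint label set $S$, so at most $|X_\P|$ nodes are ever created; since every node of an agreement tree is labeled (Lemma \ref{lemma:profileEmbedding}), this equals the total node count of an agreement tree for $\P$, which is $O(n)$. Either way, the number of outer-loop iterations is $O(n)$.
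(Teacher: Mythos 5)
Your proposal is correct and follows essentially the same route as the paper's own (informal) justification: Corollary~\ref{cor:nonemptySemi} to justify aborting when $S=\emptyset$, the ``if'' direction of Lemma~\ref{lemma:iffstatement} to assemble the agreement tree bottom-up from the successor positions, and the fact that each non-aborting iteration strips off a nonempty set $S$ of labels to bound the number of iterations by $O(n)$. Your potential-function phrasing of the iteration bound is a slightly cleaner formalization of the paper's remark that the iteration count cannot exceed the number of nodes of an agreement tree, but it is the same idea.
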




\paragraph{Finding the maximal good decomposition\label{sec:findSuccessors}.}



\begin{algorithm}[t]
\Fn(){\ComputeSuccessor{$\pi$}}{
\SetAlgoLined
\SetNoFillComment
\DontPrintSemicolon
\SetAlgoLined
\KwData{A valid position $\pi$.}
\KwResult{Returns the maximal good decomposition $(S,\Pi)$ of $\pos$.}
$S = \{\ell: \ell \text{ is exposed in \pos}\}$, $K = \{i: \pos_i = \{\ell\} \text{ for some } \ell \in S\}$ \label{algo:initSK}\;
$\Gamma = \{A:  A = W \cap \Ch_\P(\pos), \text{ for some connected component $W$ of $\DG(\pos) \setminus S$}\}$ \label{algo:initGamma}\;
\While{$S$ contains a bad label\label{algo:getSuccWhile}}{
	Choose any bad label $\ell \in S$ \label{algo:chooseEll} \;
	$K' = \{i: \pos_i = \{\ell\}\}$\label{algo:retrieveKp}\;
	$\Gamma' = \{\CS \in \Gamma:  \Ch_{T_i}(\ell) \cap \CS \neq \emptyset \text{ for some } i \in K'\}$\label{algo:buildGammap} \; 
	$B = \bigcup_{\CS \in \Gamma'} \CS$\label{algo:unions}\;
	$\Gamma = \Gamma \setminus \Gamma' \cup \{B\}$\label{algo:updateGamma} \; 
	$S = S \setminus \{\ell\}$, $K = K \setminus K'$\;  \label{algo:getSuccWhileEnd}
}
$\Pi \leftarrow \emptyset$ \label{algo:initPi}\;
\ForEach{$\CS \in \Gamma$}{\label{algo:foreach}
	\lForEach{$i \in [k]$}{$\pos^\CS_i = \emptyset$}
	\ForEach{$i \in [k]$\label{algo:innerTest}}{
		Let $\ell$ be the single label in $\pos_i$ \;
		\If{$\Ch_{T_i}(\ell) \cap \CS \neq \emptyset$}{
			\lIf{$\ell \in S$}{
				$\pos^\CS_i = \Ch_{T_i}(\ell) \cap \CS$
			}
			\lElse{$\pos^\CS_i =  \pos_i$}
		}
	}
	$\Pi = \Pi \cup \pos^\CS$ \label{algo:updatePi}\;
}
\Return $(S,\Pi)$\; 
}
\caption{Computing the maximal good decomposition.\label{algo:abscomputesuccessor}}
\end{algorithm}

\ComputeSuccessor (Algorithm \ref{algo:abscomputesuccessor}) computes a maximal good decomposition of a position $\pos$, relying on 
an auxiliary graph known as \emph{the display graph} of the input profile and denoted by $\DG$ \cite{BryantLagergren06,DengFB2017b,DengFB2017}.
The graph $\DG$ is obtained from the disjoint union of the underlying trees $T_1, \dots, T_k$ of the $\P$ by identifying nodes that have the same label. Multiple edges between the same pair of nodes are replaced by a single edge. See Figure~\ref{fig:displayGraph}. 

\begin{figure}[t]
\centering
  \includegraphics[scale=0.45]{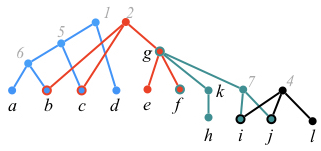}
  \caption{The display graph $\DG$ for the profile of Figure \ref{fig:profile}.} 
  \label{fig:displayGraph}
\end{figure}

$\DG$ has $O(nk)$ nodes and edges, and can be constructed in $O(nk)$ time.  By Assumption \ref{as:fullySingular}, %
there is a bijection between the labels in $X$ and the nodes of $\DG$.  Thus, from this point forward, we refer to the nodes of $\DG$ by their labels.
%
For a valid position  $\pos$, $\DG(\pos)$ denotes the subgraph of $\DG$ induced by $X(\pos)$. Thus, $\DG(\initPos) = \DG$.

%

Lines \ref{algo:initSK}--\ref{algo:getSuccWhileEnd} of \ComputeSuccessor construct the minimal good partition of $\Ch_\P(\pos)$. 
Line \ref{algo:initSK} initializes $S$ to contain all exposed labels in \pos, and sets $K$ to consist of the indices of the trees in \P\ that contain the labels in $S$.  Line \ref{algo:initGamma} initializes $\Gamma$ using $\DG(\pos)$.  
We say that a label $\ell \in S$ is \emph{bad} if there exist $i \in K$ and $\CS \in \Gamma$ such that $\pi_i = \{\ell\}$ and $|\Ch_{T_i}(\ell) \cap \CS| \ge 2$.
Lines \ref{algo:getSuccWhile}--\ref{algo:getSuccWhileEnd} construct the minimal nice partition $(S,\Gamma)$ of $\Ch_\P(\pos)$ by deleting bad labels from $S$ and merging sets in $\Gamma$ accordingly. 
Let $(S^*,\Gamma^*)$ denote the minimal good partition of $\Ch_\P(\pos)$.  

\begin{lemma}\label{lemma:computesuccessor}
Let \pos\ be a valid position in a profile $\P$ and let $(S^*,\Gamma^*)$ be the minimal good partition of $\Ch_\P(\pos)$.
Let $(S_0,\Gamma_0)$ denote the initial value of $(S,\Gamma)$ in \ComputeSuccessor before entering the \textbf{while} loop,
$(S_j,\Gamma_j)$ denote the value of $(S,\Gamma)$ after $j$ executions of the body of the loop, and $r$ denote the total number of iterations.  Then, 
$r \le k$ and
$(S_0,\Gamma_0) \sqsubset (S_1,\Gamma_1)  \sqsubset (S_2, \Gamma_2)  \sqsubset \dots  \sqsubset (S_r,\Gamma_r) = (S^*,\Gamma^*)$.  
\end{lemma}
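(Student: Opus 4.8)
The plan is to prove Lemma~\ref{lemma:computesuccessor} by establishing a loop invariant that captures the state of the pair $(S_j, \Gamma_j)$ after each iteration, and then showing that this invariant forces both the strict descent in the $\sqsubseteq$ order and the correct termination at the minimal good partition $(S^*, \Gamma^*)$. The key structural observation is that the algorithm only ever \emph{removes} labels from $S$ (never adds) and only ever \emph{merges} blocks of $\Gamma$ (never splits). Since removing a label from $S$ makes $S$ smaller (so the superset condition $S_j \supseteq S_{j+1}$ gives the correct direction for finer-than) and merging blocks makes the partition coarser (so each new block is a superset of old blocks), each iteration that actually fires produces $(S_{j+1}, \Gamma_{j+1}) \sqsubset (S_j, \Gamma_j)$ in the sense of Definition preceding Lemma~\ref{lemma:MinimalPart}. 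I should be careful that the direction of $\sqsubseteq$ matches the paper's convention: $(S,\Gamma) \sqsubseteq (S',\Gamma')$ means $S \supseteq S'$ and every block of $\Gamma$ refines a block of $\Gamma'$, so the chain $(S_0,\Gamma_0) \sqsubset \cdots \sqsubset (S_r,\Gamma_r)$ reads from finest to coarsest, i.e.\ from largest $S$ to smallest $S$, which is exactly what removing labels does.

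First I would verify that each iteration is \emph{strict}, i.e.\ $(S_j,\Gamma_j) \neq (S_{j+1},\Gamma_{j+1})$. The loop body in lines~\ref{algo:chooseEll}--\ref{algo:getSuccWhileEnd} fires only when $S$ contains a bad label $\ell$; it then removes $\ell$ from $S$ (line~\ref{algo:getSuccWhileEnd}), so $S_{j+1} = S_j \setminus \{\ell\} \subsetneq S_j$, guaranteeing strictness immediately. This also establishes $S_{j+1} \subseteq S_j$, half of the $\sqsubseteq$ relation. For the partition half, I would observe that line~\ref{algo:updateGamma} replaces the subcollection $\Gamma'$ of blocks (those meeting some child-set $\Ch_{T_i}(\ell)$ for $i \in K'$) by their union $B$, leaving all other blocks untouched; hence every block of $\Gamma_{j+1}$ is either an unchanged block of $\Gamma_j$ or the superset $B$ of several blocks of $\Gamma_j$, so each block of $\Gamma_j$ is contained in a block of $\Gamma_{j+1}$, which is precisely $\Gamma_{j+1} \sqsupseteq \Gamma_j$ in the refinement order.

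Next I would bound the number of iterations. Since each iteration removes exactly one label from $S$, and a label can only be removed if it lies in $S$, the number of iterations $r$ is at most $|S_0|$. To get the sharper bound $r \le k$, I would use that $S_0$ consists of exposed labels, and by the definition of exposed, an exposed label $\ell$ satisfies $\pos_i = \{\ell\}$ for every tree $i$ containing it; together with the structure of a position (each $\pos_i$ is a singleton or empty), the exposed labels correspond injectively to distinct positions $\pos_i$, of which there are at most $k$. I would make this count precise by showing each bad label removed is associated with at least one index $i \in K'$ and that the index sets $K'$ are disjoint across iterations because $K$ shrinks by $K'$ each time (line~\ref{algo:getSuccWhileEnd}), so the total number of iterations is bounded by $|K_0| \le k$.

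The main obstacle, and the heart of the lemma, is proving that the fixed point reached when the loop exits is exactly the \emph{minimal} good partition $(S^*,\Gamma^*)$, not merely \emph{some} good partition. For this I would argue in two directions. \textbf{Soundness:} when the loop terminates, $S$ contains no bad labels, so condition~(N1) holds for every $\ell \in S$, and the merging in line~\ref{algo:updateGamma} together with the connected-component initialization in line~\ref{algo:initGamma} ensures (N2) and the disjointness-of-associated-positions condition; hence $(S_r,\Gamma_r)$ is a good partition. \textbf{Minimality:} I would show by induction on $j$ that $(S_j, \Gamma_j) \sqsubseteq (S^*, \Gamma^*)$ is impossible to strengthen, i.e.\ every good partition $(S',\Gamma')$ satisfies $(S^*,\Gamma^*) \sqsubseteq (S',\Gamma')$, and that each merge the algorithm performs is \emph{forced}, meaning any good partition must also remove $\ell$ from its semi-universal set and merge the corresponding blocks. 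Concretely, if $\ell$ is bad because some $\Ch_{T_i}(\ell) \cap A$ has size at least two, then (N1) fails for $\ell$, so $\ell \notin S'$ for any good partition $(S',\Gamma')$; and once $\ell \notin S'$, condition~(N2) forces $\Ch_\P(\ell) \subseteq X_\P(A')$ for the block $A'$ containing any child of $\ell$, which forces exactly the union taken in line~\ref{algo:unions}. Invoking Lemma~\ref{lemma:MinimalPart} for uniqueness of the minimal good partition, these forced moves show the algorithm never over-merges, so it halts precisely at $(S^*,\Gamma^*)$. The delicate point I expect to spend the most care on is verifying that the \emph{local} badness test (a single $\ell$, a single block $A$) correctly drives the computation to the \emph{global} minimum, which requires checking that merges triggered by one bad label cannot retroactively create or destroy the badness status needed to justify a later merge in a way that breaks the induction.
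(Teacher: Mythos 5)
Your plan tracks the paper's proof almost step for step: strict descent because each iteration deletes exactly one label from $S$ and only merges blocks of $\Gamma$, the bound $r\le k$ via $|S_0|\le k$ (your disjoint-$K'$ accounting is a harmless variant), soundness of the terminal pair because no bad labels remain, and minimality via an induction maintaining $(S_j,\Gamma_j)\sqsubseteq (S^*,\Gamma^*)$ so that every label removal and every merge is forced. All of that is correct and is exactly the paper's strategy.

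The one genuine gap is the base case of the minimality induction. Your key step reads: if $\ell$ is bad because $|\Ch_{T_i}(\ell)\cap A|\ge 2$ for some current block $A$, then (N1) fails for $\ell$ in \emph{any} good partition $(S',\Gamma')$, so $\ell\notin S'$. But badness is measured against the algorithm's current blocks, while (N1) for $(S',\Gamma')$ is a statement about the blocks of $\Gamma'$; the implication transfers only if $A$ is contained in a single block of $\Gamma'$, i.e.\ only if $(S_j,\Gamma_j)\sqsubseteq(S',\Gamma')$ already holds. For $j\ge 1$ that is your inductive hypothesis, but for $j=0$ you must separately prove that each $A\in\Gamma_0$ --- a connected component of $\DG(\pos)\setminus S_0$ intersected with $\Ch_\P(\pos)$ --- lies inside a single block of $\Gamma^*$. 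The paper does this with a dedicated observation: since $S^*\subseteq S_0$, every component of $\DG(\pos)\setminus S_0$ sits inside a component of $\DG(\pos)\setminus S^*$, and the requirement $X_\P(\pos^A)\cap X_\P(\pos^B)=\emptyset$ on good partitions prevents $\Gamma^*$ from splitting such a component between two blocks (a display-graph path joining the two sides would force the two label sets to intersect). Without this base case the forced-merge argument never gets started, and the fixed point you reach could a priori be a good partition that is not the minimal one. Supply that argument and your proof coincides with the paper's.
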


By Lemma \ref{lemma:computesuccessor}, the pair $(S,\Gamma)$ constructed in lines  \ref{algo:getSuccWhile}--\ref{algo:getSuccWhileEnd} of \ComputeSuccessor is a minimal good partition of $\Ch_\P(\pos)$.
The \textbf{foreach} loop of lines \ref{algo:initPi}--\ref{algo:updatePi} simply uses Equation \eqref{eqn:posDefn} to construct the maximal good decomposition $(S,\Pi)$ of $\pos$ from $(S,\Gamma)$.  We thus have the following.

\begin{lemma}\label{lemma:getDec}
\ComputeSuccessor returns the maximal good decomposition of $\pos$. 
\end{lemma}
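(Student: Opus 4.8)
The plan is to compose the two established facts about the two phases of \ComputeSuccessor. First, by Lemma \ref{lemma:computesuccessor}, once the \textbf{while} loop of lines \ref{algo:getSuccWhile}--\ref{algo:getSuccWhileEnd} terminates, the pair $(S,\Gamma)$ equals the unique minimal good partition $(S^*,\Gamma^*)$ of $\Ch_\P(\pos)$. It then remains only to show that the \textbf{foreach} loop of lines \ref{algo:initPi}--\ref{algo:updatePi} transforms $(S,\Gamma)=(S^*,\Gamma^*)$ into exactly the good decomposition $(S,\Pi)$ associated with it under the bijection of Lemma \ref{lemma:DecPart}.

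To do this, I would argue that for each set $A \in \Gamma$ the loop computes the position $\pos^A$ defined by Equation \eqref{eqn:posDefn}, so that after the loop $\Pi = \{\pos^A : A \in \Gamma\}$. The argument is a direct line-by-line comparison with \eqref{eqn:posDefn}. The loop first sets $\pos^A_i = \emptyset$ for every $i \in [k]$; this settles two of the cases that yield $\emptyset$: the case $\pos_i = \emptyset$ (where there is no label $\ell$ to examine, so the subsequent conditional is not triggered) and the first case of \eqref{eqn:posDefn}, namely $\Ch_{T_i}(\ell) \cap A = \emptyset$, in which the value is never overwritten. When $\pos_i = \{\ell\}$ and $\Ch_{T_i}(\ell) \cap A \neq \emptyset$, the inner conditional of line \ref{algo:innerTest} assigns $\pos^A_i = \Ch_{T_i}(\ell) \cap A$ if $\ell \in S$ and $\pos^A_i = \pos_i$ otherwise, matching the remaining two cases of \eqref{eqn:posDefn} verbatim. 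Hence the $\Pi$ accumulated by the loop is precisely the set $\{\pos^A : A \in \Gamma\}$ specified in Lemma \ref{lemma:DecPart}(ii).

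Finally, I would invoke Lemma \ref{lemma:DecPart}(ii) to conclude that $(S,\Pi)$ is the good decomposition of $\pos$ associated with the good partition $(S,\Gamma)=(S^*,\Gamma^*)$. Since $(S^*,\Gamma^*)$ is the minimal good partition of $\Ch_\P(\pos)$, unique by Lemma \ref{lemma:MinimalPart}, the definition of the maximal good decomposition gives that $(S,\Pi)$ is exactly the maximal good decomposition of $\pos$, which is what \ComputeSuccessor returns.

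I do not expect a genuine obstacle: the lemma is essentially a correctness statement chaining Lemma \ref{lemma:computesuccessor} (the \textbf{while} loop yields the minimal partition) with the bijection of Lemma \ref{lemma:DecPart}. The only point requiring care is confirming that the loop's initialization $\pos^A_i = \emptyset$ correctly realizes both cases of \eqref{eqn:posDefn} that produce $\emptyset$, namely $\pos_i = \emptyset$ and $\Ch_{T_i}(\ell) \cap A = \emptyset$, since neither is handled by an explicit assignment inside the conditional.
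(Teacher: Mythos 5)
Your proposal is correct and follows essentially the same route as the paper, which justifies this lemma by chaining Lemma \ref{lemma:computesuccessor} (the \textbf{while} loop yields the minimal good partition) with the observation that the \textbf{foreach} loop implements Equation \eqref{eqn:posDefn}, so that Lemma \ref{lemma:DecPart}(ii) gives the associated (maximal) good decomposition. Your line-by-line check that the loop's initialization to $\emptyset$ covers the unhandled cases of \eqref{eqn:posDefn} is a bit more explicit than the paper's one-sentence justification, but it is the same argument.
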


%

%
%
\paragraph{Implementation.}

Throughout its execution, \BuildAST maintains the display graph $\DG$. 
%
Also, for each label $\ell \in X$, it maintains a field $\ell.\expose$ containing every index $i$ such that $\pos_i = \{\ell\}$ for some $\pos$ in $Q$. Label $\ell$ is exposed when $|\ell.\expose| = k_\ell$, where $k_\ell$ denotes the number of input trees containing label $\ell$.
For each $\pos$ in \BuildAST's queue, the set $\Ch_\P(\pos)$ is stored as a sparse array $((i,\Ch_{T_i}(\pi_i)): i \in [k] \text{ and } \Ch_{T_i}(\pi_i)) \neq \emptyset)$.  This enables \ComputeSuccessor to access the parts of $\Ch_\P(\pos)$ associated with each input tree separately.  
We use this representation of $\Ch_\P(\pos)$ to build similar representations of the sets in the partition $\Gamma$ of $\Ch_\P(\pos)$ produced from $\DG(\pos) \setminus S$ in line \ref{algo:initGamma} of \ComputeSuccessor.
For each label $a \in \Ch_{\P}(\pos)$, we maintain a mapping that returns, in $O(1)$ time, the set $A \in \Gamma$ containing $a$. 
During the execution of \ComputeSuccessor's \textbf{while} loop, sets in $\Gamma$ may be merged, and representations of these merged sets must be produced and the mapping from $\Ch_{\P}(\pos)$ to $\Gamma$ must be modified.

\begin{lemma}\label{time:maintainGraph}
The total time needed to maintain the display graph throughout the entire execution of \BuildAST is $\mathcal{O}(nk\log^2 (nk))$.
\end{lemma}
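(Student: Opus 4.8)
The plan is to reduce the entire maintenance of $\DG$ over the run of \BuildAST to a sequence of $O(nk)$ edge deletions, plus the bookkeeping needed to recover connected components after each deletion, and then to charge each deletion $O(\log^2(nk))$ amortized time using a fully-dynamic graph-connectivity structure. First I would establish the structural fact that pins down the $O(nk)$ count. The positions dequeued by \BuildAST form a tree under the successor relation of Lemma~\ref{lemma:DecPart}, whose children of $\pos$ are the successor positions of its maximal good decomposition. By the covering and disjointness conditions of a good decomposition, the semi-universal set $S$ and the label sets $X_\P(\pos^{(j)})$ of the successor positions partition $X_\P(\pos)$; hence the family $\{X_\P(\pos)\}$ ranging over processed positions is laminar, and every label lies in the set $S$ of exactly one processed position, namely the one where it becomes semi-universal. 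Consequently each node of $\DG$ is removed from the active graph at most once, and each edge, removed when the first of its two endpoints is removed, is deleted at most once. Since $\DG$ has $O(nk)$ edges, the total number of edge deletions over the whole execution is $O(nk)$.

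Next I would argue that a single global decremental structure suffices, with no rebuilds or re-insertions. Maintain $\DG$ in the standard fully-dynamic connectivity data structure (that of Holm, de Lichtenberg, and Thorup), which supports edge deletion in $O(\log^2 m)$ amortized time on an $m$-edge graph and keeps a spanning forest represented by Euler-tour trees augmented with subtree sizes. When \BuildAST processes a position $\pos$, it deletes from this global graph all edges incident to the labels in $S$; by the laminar structure above, deletions performed in one branch of the position tree act on label sets disjoint from those of sibling branches, so they never interfere, and deletions performed for the ancestors of $\pos$ have already isolated $X_\P(\pos)$ as a union of current components. Thus the components computed in line~\ref{algo:initGamma} are exactly the components of the current global graph restricted to $X_\P(\pos)$, and each deletion automatically reveals any component split. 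Summed over the $O(nk)$ deletions, the connectivity maintenance costs $O(nk\log^2(nk))$.

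The step I expect to be the crux is maintaining, in $O(1)$ query time, the map sending each label to its current set in $\Gamma$, both as components split in line~\ref{algo:initGamma} and as sets merge in the \textbf{while} loop of lines~\ref{algo:getSuccWhile}--\ref{algo:getSuccWhileEnd}. For splits I would use the smaller-half relabeling rule: when deleting an edge breaks a component, I compare the sizes of the two resulting Euler-tour trees, traverse the smaller one, and give its labels a fresh set identifier; a label is relabeled only when the size of its set at least halves, so it is relabeled $O(\log(nk))$ times in total, and each traversal costs $O(\log(nk))$ per visited label. The merges in the \textbf{while} loop are handled symmetrically by union-by-size on the same identifiers (relabeling the smaller side), with Lemma~\ref{lemma:computesuccessor} bounding the per-position merge count; the smaller-half rule again bounds the total relabeling work by $O(nk\log(nk))$, which is dominated by the connectivity cost. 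Combining the three contributions yields the claimed bound of $O(nk\log^2(nk))$. The main obstacle is precisely this third step: showing that a consistent set of component identifiers, together with the assumed $O(1)$-time label-to-set lookup, can be kept through both splits and merges while charging every relabel to a halving of set size, so that the bookkeeping stays within the $O(nk\log^2(nk))$ budget set by the connectivity structure.
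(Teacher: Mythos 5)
Your proposal is correct and follows essentially the same route as the paper: an HDT-style fully-dynamic connectivity structure over the display graph, the observation that each vertex and edge is deleted at most once (which the paper phrases via ``new'' versus ``old'' exposed labels rather than via laminarity of the position tree, but it is the same fact), and smaller-component scanning to maintain the label-to-component map. The only organizational difference is that the paper excludes the \textbf{while}-loop merges from this lemma entirely and accounts for them separately in Lemma~\ref{time:getDec}, whereas you fold them in here; either way they are dominated by the connectivity cost.
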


In the following results, $d_i$ denotes the maximum number of children of a node in tree $T_i$, for each $i \in [d]$. 

\begin{lemma}\label{time:initGD}
Excluding the time needed to maintain the display graph, Lines \ref{algo:initSK} and \ref{algo:initGamma} of \ComputeSuccessor take $\mathcal{O}(\sum_{i \in [k]} d_i)$ time.
\end{lemma}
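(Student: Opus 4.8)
The plan is to charge every connectivity query on the display graph to the maintenance cost of Lemma~\ref{time:maintainGraph} and then show that what remains in the two lines is linear in $\sum_{i \in [k]} d_i$. I will use the data structures from the Implementation paragraph: the sparse array $((i,\Ch_{T_i}(\pos_i)))$ representing $\Ch_\P(\pos)$, the maintained fields $\ell.\expose$ together with their cardinalities, and an $O(1)$-time query on the maintained display-graph structure that, for any label, returns an identifier of the connected component of $\DG(\pos)\setminus S$ containing it.

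First I would bound Line~\ref{algo:initSK}. The only candidates for $S$ are the (at most $k$) head labels of $\pos$, i.e.\ the labels $\ell$ with $\pos_i=\{\ell\}$ for some $i$, which are available in $O(k)$ time from the representation of $\pos$. Since each label is the head of at most one position currently in the queue, the field $\ell.\expose$ records exactly the trees in which $\ell$ is presently a head; hence $\ell$ is exposed in $\pos$ if and only if $|\ell.\expose| = k_\ell$, a test taking $O(1)$ time per candidate. This yields $S$ in $O(k)$ time. I then form $K = \bigcup_{\ell \in S}\ell.\expose$; because distinct heads occupy distinct tree-slots, these sets are pairwise disjoint subsets of $[k]$, so $|K|\le k$ and $K$ is assembled in $O(k)$ time. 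Thus Line~\ref{algo:initSK} runs in $O(k)$ time.

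Next I would bound Line~\ref{algo:initGamma}. The components of $\DG(\pos)\setminus S$ are supplied by the dynamically maintained display graph, whose upkeep is accounted for in Lemma~\ref{time:maintainGraph} and is therefore excluded here. It remains to compute $A = W \cap \Ch_\P(\pos)$ for each component $W$. I do this with a single scan of $\Ch_\P(\pos)$ through its sparse array: for each label $a$, together with its tree index $i$, I query its component in $O(1)$ and append $a$ to the sparse representation of the corresponding set $A$, updating the label-to-set map at the same time. The number of pairs scanned is $\sum_{i:\pos_i\neq\emptyset}|\Ch_{T_i}(\pos_i)| \le \sum_{i\in[k]} d_i$, since each $\pos_i$ is a single label with at most $d_i$ children; hence Line~\ref{algo:initGamma} costs $O(\sum_{i\in[k]} d_i)$ apart from the excluded maintenance. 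Adding the two bounds and using $\sum_{i\in[k]} d_i = \Omega(k)$ (we may assume each tree has at least one edge, single-node trees being trivial) gives the claimed $O(\sum_{i\in[k]} d_i)$.

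The step I expect to require the most care is the clean separation of costs: I must argue that all topological information about $\DG(\pos)\setminus S$ — the component identity of each label — is delivered by the maintained structure in $O(1)$ per query, so that recomputing the components is charged to Lemma~\ref{time:maintainGraph} and not to this lemma, and that the $\ell.\expose$ counters are kept current at every \enqueue\ and \dequeue\ so that the exposedness tests are genuinely constant-time. Everything else is a routine linear scan over $\Ch_\P(\pos)$.
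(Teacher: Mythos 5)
Your proposal is correct and follows essentially the same route as the paper: exposedness of the at most $k$ head labels is tested in $O(1)$ each via the $\ell.\expose$ field, and $\Gamma$ is built by a single scan of the $\le \sum_{i\in[k]} d_i$ entries of the sparse representation of $\Ch_\P(\pos)$ using the $O(1)$-time label-to-component mapping, with all display-graph maintenance charged to Lemma~\ref{time:maintainGraph}. Your explicit remark that $\sum_{i\in[k]} d_i = \Omega(k)$ is needed to absorb the $O(k)$ cost of Line~\ref{algo:initSK} is a point the paper leaves implicit.
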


%

\begin{lemma}\label{time:getDec}
\ComputeSuccessor's \textbf{while} loop takes $\mathcal{O}(k\sum_{i \in [k]} d_i)$ time.  
\end{lemma}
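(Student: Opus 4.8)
The plan is to bound the total work of \ComputeSuccessor's \textbf{while} loop by analyzing its two sources of cost: the per-iteration overhead of choosing a bad label and identifying the sets to merge, and the amortized cost of performing the merges themselves. By Lemma \ref{lemma:computesuccessor}, the loop runs at most $k$ times, so if I can show each iteration costs $\Oh(\sum_{i \in [k]} d_i)$, the bound $\Oh(k \sum_{i \in [k]} d_i)$ follows immediately. First I would examine a single iteration. Lines \ref{algo:chooseEll}--\ref{algo:retrieveKp} select a bad label $\ell$ and retrieve $K'$, the set of trees in which $\ell$ is the exposed label; this is $\Oh(k)$ using the $\ell.\expose$ field. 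The crux of one iteration is lines \ref{algo:buildGammap}--\ref{algo:updateGamma}: building $\Gamma'$, the collection of sets of $\Gamma$ that are ``hit'' by some child of $\ell$, forming their union $B$, and updating $\Gamma$.

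The key step is to argue that identifying $\Gamma'$ costs $\Oh(\sum_{i \in [k]} d_i)$ rather than $\Oh(|\Gamma|)$. For each $i \in K'$, I would iterate over the children $\Ch_{T_i}(\ell)$ — of which there are at most $d_i$ — and for each such child $a$ use the $\Oh(1)$ mapping from $\Ch_\P(\pos)$ to $\Gamma$ (maintained as described in the Implementation paragraph) to locate the set $\CS \in \Gamma$ containing $a$. Collecting these sets over all $i \in K'$ and all children gives exactly $\Gamma'$, at total cost $\Oh(\sum_{i \in K'} d_i) = \Oh(\sum_{i \in [k]} d_i)$. Forming $B = \bigcup_{\CS \in \Gamma'}\CS$ and installing it in $\Gamma$ via line \ref{algo:updateGamma} then requires updating the membership mapping for the elements of the merged sets.

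The main obstacle I anticipate is bounding the merge cost, since naively rewriting the mapping for every element of $B$ at each merge could be too expensive if the same elements are merged repeatedly across iterations. I would resolve this with a standard amortized argument: rather than charging the merge to $|B|$, I charge it to the number of distinct sets $|\Gamma'|$ that are destroyed, using the observation that each child $a \in \Ch_{T_i}(\ell)$ already identified in the previous step points directly to its set. Since each iteration of the loop examines at most $\sum_{i \in [k]} d_i$ children, the number of sets in $\Gamma'$ merged in one iteration is $\Oh(\sum_{i \in [k]} d_i)$, and the bookkeeping to redirect these sets (using a union-find–style structure, or by relabelling the smaller sets into the larger) is subsumed in the same bound per iteration. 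An alternative is to defer updating the mapping and charge the relabelling globally, but because only $\Oh(k)$ iterations occur and each touches $\Oh(\sum_{i \in [k]} d_i)$ elements, even the simple bound of redirecting every element of $B$ at each iteration yields the claimed $\Oh(k \sum_{i \in [k]} d_i)$ total.

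Putting these together: each of the at most $k$ iterations costs $\Oh(k)$ for the $K'$-lookup plus $\Oh(\sum_{i \in [k]} d_i)$ for locating and merging the hit sets, and since $k \le \sum_{i \in [k]} d_i$ (each tree has a root with at least one child, so $d_i \ge 1$), the per-iteration cost is $\Oh(\sum_{i \in [k]} d_i)$. Multiplying by the $\Oh(k)$ iteration bound from Lemma \ref{lemma:computesuccessor} gives the stated $\Oh(k \sum_{i \in [k]} d_i)$ running time for the \textbf{while} loop, completing the proof.
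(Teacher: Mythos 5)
Your proposal is correct and follows essentially the same route as the paper: at most $k$ iterations by Lemma \ref{lemma:computesuccessor}, each bounded by $\Oh(\sum_{i \in [k]} d_i)$ using the constant-time membership map from $\Ch_\P(\pos)$ to $\Gamma$ and the fact that $|B| \le |\Ch_\P(\pos)| \le \sum_{i \in [k]} d_i$, so rewriting the mapping for all of $B$ at every merge already suffices. The only detail you gloss over is how the loop guard is evaluated: finding a bad label is not an $\Oh(k)$ lookup but requires scanning the children of each candidate $\ell$ and counting, for each set of $\Gamma$, how many of those children fall in it (the paper does this with per-set counters, justified by $|\Gamma| \le \sum_{i \in [k]} d_i$), which still fits within the $\Oh(\sum_{i \in [k]} d_i)$ per-iteration budget, so your final bound stands.
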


\begin{theorem}\label{thm:totalTime}
\BuildAST can be implemented to run in $\Oh(n k (\sum_{i \in [k]} d_i + \log^2(nk)))$  time, where $n$ is the number of distinct taxa in \P, $k$ is the number of trees in \P, and $d_i$ is the maximum number of children of tree $T_i$, for $i \in [k]$.
\end{theorem}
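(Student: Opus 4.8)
The plan is to bound the total running time by splitting it into two separate accounts: the cost of maintaining the display graph $\DG$, which Lemma~\ref{time:maintainGraph} already bounds \emph{globally} over the whole execution, and the remaining per-iteration work, which I would bound uniformly for a single iteration and then multiply by the number of iterations. The starting point is Theorem~\ref{theo:buildast}, which establishes both correctness and the fact that the outer \textbf{while} loop of \BuildAST\ executes $O(n)$ times; it therefore remains only to account for the time.

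For the first account I would simply invoke Lemma~\ref{time:maintainGraph}: maintaining $\DG$, together with the label-to-component mappings that \ComputeSuccessor\ queries, costs $O(nk\log^2(nk))$ over the entire run. The essential point to stress is that this is an \emph{aggregate} bound, arising from $O(nk)$ updates to the dynamic structure at an amortized $O(\log^2(nk))$ cost each; it is \emph{not} a per-iteration figure and so enters the final bound additively, rather than being re-multiplied by the $O(n)$ iteration count.

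For the second account I would show that a single call to \ComputeSuccessor, \emph{excluding} the display-graph operations already charged above, runs in $O(k\sum_{i\in[k]} d_i)$ time. By Lemma~\ref{time:initGD}, the initialization in Lines~\ref{algo:initSK}--\ref{algo:initGamma} costs $O(\sum_i d_i)$, and by Lemma~\ref{time:getDec} the \textbf{while} loop costs $O(k\sum_i d_i)$. The only piece not covered by an existing lemma is the final \textbf{foreach} loop (Lines~\ref{algo:initPi}--\ref{algo:updatePi}), which assembles $\Pi$ from $\Gamma$ via Equation~\eqref{eqn:posDefn}. I would bound it using $|\Gamma|\le|\Ch_\P(\pos)|\le\sum_i d_i$, together with the $O(1)$ mapping from each child label to its set in $\Gamma$: for a fixed tree $i$ the intersections $\Ch_{T_i}(\ell)\cap A$ partition $\Ch_{T_i}(\ell)$ as $A$ ranges over $\Gamma$, so computing all of them costs $O(d_i)$, while the sparse initialization of the $\pos^A$ contributes $O(k|\Gamma|)$; the total is $O(k\sum_i d_i)$. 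Hence each call runs in $O(k\sum_i d_i)$ time, and summing over the $O(n)$ iterations yields $O(nk\sum_i d_i)$.

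Finally I would verify that the remaining bookkeeping inside the outer loop is dominated by these two accounts. Node creation and parent assignment are $O(1)$ per iteration; each \enqueue\ and \dequeue\ manipulates a reference in $O(1)$; and the assignments $\phi(\ell)=r(\pos)$ cost $\sum_{\pos}|S|=O(n)$ in total, since the semi-universal sets $S$ attached to distinct positions are pairwise disjoint subsets of $X$. Adding the two accounts gives $O(nk\sum_i d_i)+O(nk\log^2(nk))=O\bigl(nk(\sum_i d_i+\log^2(nk))\bigr)$, as claimed. The one genuinely delicate step is the clean separation of the two accounts: one must make sure the $\log^2(nk)$ factor is attributed solely to the amortized display-graph maintenance of Lemma~\ref{time:maintainGraph} and added once, rather than being silently reincurred inside every iteration, and one must confirm that the final \textbf{foreach} loop, for which no prior timing lemma is stated, stays within the per-call budget of $O(k\sum_i d_i)$.
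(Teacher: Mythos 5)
Your proposal is correct and follows essentially the same route as the paper: Lemma~\ref{time:maintainGraph} contributes the $O(nk\log^2(nk))$ display-graph term once globally, Lemmas~\ref{time:initGD} and~\ref{time:getDec} give $O(k\sum_{i\in[k]}d_i)$ per call to \ComputeSuccessor, and Theorem~\ref{theo:buildast} supplies the $O(n)$ iteration count. The only (harmless) divergence is in the final \textbf{foreach} loop, which you charge per call at $O(k|\Gamma|)=O(k\sum_i d_i)$ while the paper amortizes it globally at $O(nk)$ by noting that each $A\in\Gamma$ yields one of the $O(n)$ positions ever generated; both accountings land within the claimed bound.
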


\section{Concluding Remarks\label{sec:discussion}}

\BuildAST may be much faster in practice than Theorem \ref{thm:totalTime} suggests, since that bound assumes the unlikely scenario where every edge deletion performed in constructing $\DG(\pos) \setminus S$ in \ComputeSuccessor generates a new component and that most of these components are remerged in the \ComputeSuccessor's \textbf{while} loop.
In any case, Theorem \ref{thm:totalTime} implies that \BuildAST performs well if the sum of the maximum out-degrees is small relative to the number of taxa.

The running time of $\BuildAST$ can be further improved to $\Oh(n k (\sum_{i \in [k]} d_i + \log^2(nk)/\log \log (nk)))$ using the graph connectivity data structure of \cite{Wulff-Nilsen2013}.  It is not clear, however, that this data structure would have a practical impact.  In fact, experimental work \cite{FBLiu2019} suggests that data structures much simpler than HDT (and, therefore, than \cite{Wulff-Nilsen2013}) perform well in practice.  

\BuildAST can be modified to run in $\mathcal{O}(nk \log^2 (nk))$ time for profiles $\P$  where the input trees are all binary and solely leaf-labeled. For such profiles, $|A \cap \Ch_{T_i}(\pos_i)| \leq 2$, for $A \in \Gamma$ and $i \in [k]$ in a position $\pos$ of $\P$. Labels $a, a' \in \Ch_{T_i}(\pos_i)$ are either in the same set $A$ or in different sets $A, A'$ where $A, A' \in \Gamma$.  In the first case, $\ell \in \pos_i$ must be bad. Bad labels can then be detected earlier in Line \ref{algo:initGamma} and directly removed from $S$. Thus, we can skip \ComputeSuccessor's \textbf{while} loop. Hence, maintaining graph connectivity dominates the performance of \BuildAST. 

\BuildAST enables users to deal with hard polytomies.  In applications, we may encounter both hard and soft polytomies.  It would be interesting to modify \BuildAST to handle a mixture of both types polytomies, as appropriate.

\bibliographystyle{splncs04}
\bibliography{../../Bibliographies/mybib,../../Bibliographies/phylogenies}

\newpage

\appendix

\section{Omitted Proofs}

\subsection*{Proof of Lemma \ref{lemma:profileEmbedding}}

Suppose there is a node $v \in V(T)$ such that $\phi^{-1}(v) = \emptyset$.  Note that $v$ cannot be a leaf. Let $u_1, u_2, \dots , u_d$ be the children of $v$.  One can prove the following.

\begin{quote}\textbf{Fact.}
\emph{For each $i \in [k]$, there is at most one $j \in [d]$ such that $X_\P(u_j) \cap X_i \neq \emptyset$.}
\end{quote}

Now, choose any $j \in [d]$.  Let $T'$ be the tree obtained by contracting the edge $(v,u_j) \in E(T)$.  That is, $T'$ is obtained by eliminating edge $(v,u_j)$, deleting $u_j$, and making $\Ch_{T'}(v) = \Ch_T(v) \cup \Ch_T(u_j)$.  Let $\T' = (T', \phi')$, where $(\phi')^{-1}(w) = \phi^{-1}(w)$, if $w \in V(T) \setminus \{v,u_j\}$ and $(\phi')^{-1}(v) = \phi^{-1}(v) \cup \phi^{-1}(u_j)$.  Then, the above fact implies that, for each $i \in [k]$, $\Cl(\T|X_i) = \Cl(\T'|X_i)$. That is, $\T'$ is also an agreement tree for $\P$.  If we repeat this contraction operation until it no longer applies, the final tree $\T'' = (T'',\phi'')$ will satisfy $(\phi'')^{-1}(v) \neq \emptyset$ for each node $v \in V(T'')$.
\QED

\subsection*{Proof of Lemma \ref{lemma:embedding}}

We argue that, for each $i \in [k]$, $\T$ agrees with $\T_i$ if and only if there exists a topological embedding $\phi_i$ of $\T_i$ into $\T$.

($\Longrightarrow$) Suppose that $\phi_i$ is a topological embedding from $\T_i$ to $\T$. We show that $\Cl(\T_i) = \Cl(\T|X_i)$, which implies that $\T$ agrees with $\T_i$. 

First, we show that $\Cl(\T_i) \subseteq \Cl(\T|X_i)$ by arguing that, for each $a \in X_i$, $X_i(a) = X_\P(\phi(a)) \cap X_i$. By definition of LCAs, $X_i(a) \subseteq  X_\P(\phi(a))$.  Now, suppose that there is a label $b \in X_\P(\phi(a)) \cap X_i$ such that $b \notin X_i(a)$.  Let $c = \LCA_{T_i}(X_i(a) \cup \{b\})$.  Then, $c <_{T_i} a$.  On the other hand, $\phi(c) \ge_T \phi(a)$, contradicting condition (E2).

Next, we prove that  $\Cl(\T|X_i) \subseteq \Cl(\T_i)$. Suppose, to the contrary, that there is a cluster $Y \in \Cl(\T|X_i) \setminus \Cl(\T_i)$.
Let $a \in X_i$ be the (unique) label such that $X_i(a) \supset Y$ and for every $b \in \Ch_{T_i}(a)$ either $X_i(b) \subset Y$ or $X_i(b) \cap Y = \emptyset$.  Since $Y \notin \Cl(\T_i)$, there must exist at least two labels $c_1, c_2 \in \Ch_{T_i}(a)$ such that $X_i(c_1), X_i(c_2) \subset Y$ and at least one label $c_3 \in \Ch_{T_i}(a)$ such that $X_i(c_3) \cap Y = \emptyset$. Therefore, there is a single node $v \in \Ch_T(\phi_i(a))$ such that $\phi_i(c_1), \phi_i(c_2) \in X_\P(v)$, contradicting condition (E3).

($\Longleftarrow$) Suppose that $\T$ agrees with $\T_i$.  It is straightforward to show that $\phi_i$ must satisfy (E1).  Thus, we focus on conditions (E2) and (E3).

Suppose condition (E2) does not hold.  Then, there exists a label $b \in \Ch_{T_i}(a)$, such that  $\phi_i(a) \ge_{T} \phi_i(b)$.  Since $X_\P(b) \subset X_\P(a)$, we must in fact have $\phi_i(a) >_{T} \phi_i(b)$. But then $\T$ does not agree with $\T_i$, a contradiction.   

Suppose condition (E3) does not hold.  Then there are distinct labels $b, c \in \Ch_{T_i}(a)$ such that $\{\phi_i(b), \phi_i(c)\} \subseteq X_\P(v)$, for some $v \in \Ch_T(\phi_i(a))$. But then $\T|X_i$ contains a cluster not in $\T_i$, contradicting the assumption that $\T$ agrees with $\T_i$.
\QED

\subsection*{Proof of Lemma \ref{lemma:existenceofsupertree}}

$(\Longrightarrow)$ Suppose $\P$ has an agreement tree $\T$. For any valid position $\pos$ in $\P$, $X_\P(\pos) \subseteq X_\P$.  Thus, by Lemma \ref{lm:compatSubprofile}, $\T|X_\P(\pos)$ is an agreement tree for $\pos$. 

\noindent
($\Longleftarrow$) Suppose there is an agreement tree for every valid position $\pos$ in $\P$.  Then there must exist an agreement tree $\T$ for the initial position \initPos\ of $\P$. Since $X_\P(\initPos) = X_\P$, $\T$ must also be an agreement tree for $\P$.
\QED

\subsection*{Proof of Lemma \ref{lemma:iffstatement}}


$(\Longrightarrow)$ Suppose position \pos\ has an agreement tree $\T = (T, \phi)$ (thus, $\T$ is an $X_\P(\pos)$-tree). If $T$ consists of a single node $u = r(T)$, then we must have $\phi^{-1}(u) = \bigcup_{i \in [k]} \pos_i$. Clearly, every label in $S$ is exposed.  Let $(S, \Pi) = (\phi^{-1}(u), \emptyset)$. Since $\Pi = \emptyset$, $S = \bigcup_{i \in [k]} \pos_i = X_\P(\pos)$, so (D\ref{itemD1}) holds, and condition (D\ref{itemD3}) holds trivially.  Thus, $(S, \Pi)$ is a good decomposition of $\pos$. By Lemma \ref{lemma:profileEmbedding}, $S \neq \emptyset$.  

Now, suppose $\Ch_T(r(T)) = \{v_1, v_2,\dots, v_d\}$, where $d \ge 1$.
By Lemma \ref{lm:compatSubprofile}, for each $j \in [d]$, $\T|X_\P(v_j)$ is an agreement tree for $\P|X_\P(v_j)$.
For each $i \in [k]$ and each $j \in [d]$ such that $X_i \cap X_\P(v_j) \neq \emptyset$, let $\ell^{(j)}_i$ denote the label of the root of $\T_i|\left(X_i \cap X_\P(v_j)\right)$.  For each $j \in [d]$, define a position $\pos^{(j)}$, where, for each $i \in [k]$,
\[
\pos^{(j)}_i =
\begin{cases}
\emptyset & \text{if 
$X_i \cap X_\P(v_j) = \emptyset$} \\
\left \{\ell^{(j)}_i \right \} & \text{otherwise.}
\end{cases}
\]

Let  $S = \phi^{-1}(r(T))$ and $\Pi =  \{\pos^{(1)}, \pos^{(2)}, \dots , \pos^{(d)}\}$.  By Lemma \ref{lemma:profileEmbedding}, we can assume that $S \neq \emptyset$.
It is straightforward to show that each label in $S$ is exposed in \pos, and that, for each $j \in [d]$, position $\pos^{(j)}$ is valid.  It can also be shown that the pair $(S, \Pi)$ satisfies properties (D\ref{itemD1}) and (D\ref{itemD3}).  Thus, $(S, \Pi)$ is a good decomposition of $\pos$.

$(\Longleftarrow)$ 
Let  $(S, \Pi)$ be a good decomposition of $\pos$ such that $S \neq \emptyset$ and each position in $\Pi$ has an agreement tree.   If $\Pi = \emptyset$, then we must have $S = X_\P(\pos)$.  Let $T$ be the tree consisting of a single node $u = r(T)$ and let $\phi(\ell) = u$, for all $u \in S$.  Then, $\T = (T,\phi)$ is an agreement tree for $\pos$.

Now suppose $\Pi \neq \emptyset$.  Let $\posSet = \{\pos^{(1)}, \pos^{(2)}, \dots , \pos^{(d)}\}$.  For each $j \in [d]$, let $\T^{(j)} = (T^{(j)}, \phi^{(j)})$ be an agreement tree for $\pos^{(j)}$, and let $v_j$ be the root of $T^{(j)}$.   Let $\T = (T, \phi)$ be the $X_\P(\pos)$-tree where $T$ is assembled by creating a new node $u$ and making $\Ch_T(u) = \{v_1, v_2, \dots , v_d\}$ and, for each $\ell \in X_\P(\pi)$, $\phi(\ell)$ is defined as
$$
\phi(\ell) =
\begin{cases}
u & \text{if $\ell \in S$} \\
\phi^{(j)}(\ell) & \text{if $\ell \in X_\P(\pos^{(j)})$.}
\end{cases}
$$
Note that conditions  (D\ref{itemD1}) and (D\ref{itemD3}) imply that $\T$ is indeed an $X_\P(\pos)$-tree.
We claim that $\T$ is an agreement tree for $\pos$. By Lemma \ref{lemma:embedding}, it suffices to show that, for each $i \in [k]$, $\phi$ is a topological embedding from $\T_i|X_i(\pos_i)$ to $\T$.  
Lemma \ref{lemma:embedding} implies that, for each $i \in [k]$, $\phi^{(j)}$  is a topological embedding from $\T_i| X_i (\pos_i^{(j)})$ to $\T^{(j)}$.  Thus, every node $u$ in $\T_i|X_i(\pos_i^{(j)})$ satisfies (E1)--(E3).  For each $j \in [d]$, let $\ell_i$ be the label of the root of $\T_i|X_i(\pos_i^{(j)})$.
There are two possibilities:
\begin{enumerate}
\item $\ell_i \in \phi^{-1}(u)$.  Then, each of $\ell_i$'s children must be in a distinct subtree of $u$.  Thus, properties (E1)--(E3) are satisfied.
\item
 $\ell_i \not\in \phi^{-1}(u)$.  Then, $\ell_i$ and all of its children must be contained in a single subtree, say $\T_j$, of $u$, and the claim follows from the fact that $\phi^{(j)}$  is a topological embedding.
\end{enumerate}
\QED

\subsection*{Proof of Lemma \ref{lemma:DecPart}}

\begin{enumerate}[(i)]
\item 
For each $\ell \in \bigcup_{i \in [k]} \pos_i$, the following statements hold.
\begin{enumerate}[(a)]
\item 
If $\ell \in S$ and each $i \in [k]$ such that $\ell \in \pos_i$, then each label in $\Ch_{T_i}(\ell)$ is in a distinct subset of $\Gamma$.
\item 
If $\ell \not\in S$, then there exists a set $A \in \Gamma$ such that $\Ch_\P(\ell) \subseteq X_\P(A)$. 
\end{enumerate}
Thus, (N1) and (N2) hold.  Since $(S,\Pi)$ is a good decomposition, we also have that $X_\P(\pos^A) \cap X_\P(\pos^B) \ \emptyset$, for pair $A, B$ of distinct sets in $\Gamma$.  Hence, $(S,\Gamma)$ is a good partition of $\Ch_\P(\pos)$.
\item
Note that, each $\pos' \in \Gamma$ is valid, that (D1) holds by construction, and that  (D2) holds by definition.  Therefore, $(S,\Pi)$ is a good decomposition of \pos. 

\QED
\end{enumerate}

\subsection*{Proof of Lemma \ref{lemma:MinimalPart}}

In order to prove this lemma, we need to introduce a new concept.  Let $(S,\Gamma)$ and $(S',\Gamma')$ be two partitions of $\Ch_\P(\pos)$. We write $(S,\Gamma)\sqcap (S',\Gamma')$ to denote the partition $(S'',\Gamma'')$ where $S'' = S \cup S'$ and $\Gamma'' = \{A \cap B : A \in \Gamma, B \in \Gamma'\} \setminus \{\emptyset\}$. 

\begin{lemma}\label{lemma:nicePartitionOnSqSet}
Let $(S,\Gamma)$ and $(S',\Gamma')$ be two good partitions of $\Ch_\P(\pos)$, and let $(S'',\Gamma'') = (S,\Gamma) \sqcap (S',\Gamma')$. Then, $(S'',\Gamma'')$ is also a good partition of $\Ch_\P(\pos)$.
\end{lemma}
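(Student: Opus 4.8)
The plan is to verify directly that $(S'',\Gamma'') = (S,\Gamma)\sqcap(S',\Gamma')$ meets the definition of a good partition: that $S''$ is a set of exposed labels, that every block of $\Gamma''$ is nice with respect to $S''$, and that the positions associated with distinct blocks have disjoint label sets. That $S'' = S\cup S'$ is a set of exposed labels is immediate, and $\Gamma''$ is the common refinement of $\Gamma$ and $\Gamma'$, hence a partition of $\Ch_\P(\pos)$. Since $\Gamma$ and $\Gamma'$ are partitions, each nonempty block of $\Gamma''$ is uniquely of the form $A\cap B$ with $A\in\Gamma$ and $B\in\Gamma'$, a representation I will use throughout.

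First I would isolate the key structural fact, from which niceness falls out: \emph{if $(S,\Gamma)$ is good and $\ell$ is a position label with $\ell\notin S$, then all the children in $\Ch_\P(\ell)$ lie in a single block of $\Gamma$.} To see this, suppose children $c\in\Ch_{T_i}(\ell)$ and $c'\in\Ch_{T_{i'}}(\ell)$ lay in distinct blocks $A,A'$. Because $\ell\notin S$, Equation \eqref{eqn:posDefn} gives $\pos^A_i = \pos_i = \{\ell\}$ and $\pos^{A'}_{i'} = \pos_{i'} = \{\ell\}$, so $X_\P(\pos^A)$ contains $X_i(\ell)$ and $X_\P(\pos^{A'})$ contains $X_{i'}(\ell)$; in particular both contain the label $\ell$, contradicting the disjointness clause of goodness. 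Granting this fact, niceness of a block $C=A\cap B$ with respect to $S''$ is short. For (N2), if $\ell\notin S''$ then $\ell\notin S$ and $\ell\notin S'$, so $\Ch_\P(\ell)$ lies in a single block $A\in\Gamma$ and a single block $B\in\Gamma'$; hence $\Ch_\P(\ell)\subseteq A\cap B = C\subseteq X_\P(C)$. For (N1), if $\ell\in S''$ then $\ell\in S$ or $\ell\in S'$, say $\ell\in S$; niceness of $A$ gives $|\Ch_{T_i}(\ell)\cap A|=1$ whenever this set is nonempty, and since $\emptyset\neq \Ch_{T_i}(\ell)\cap C\subseteq\Ch_{T_i}(\ell)\cap A$, we get $|\Ch_{T_i}(\ell)\cap C|=1$.

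For the disjointness clause I would first prove the containment $X_\P(\pos^{A\cap B})\subseteq X_\P(\pos^A)\cap X_\P(\pos^B)$. This reduces to the per-tree claim $X_i(\pos^{A\cap B}_i)\subseteq X_i(\pos^A_i)$ (and symmetrically for $B$), which I would check through the three cases of Equation \eqref{eqn:posDefn}: when $\ell\in S$ both positions are children-intersections and $C\subseteq A$ gives the inclusion by monotonicity of clusters; when $\ell\notin S$ but $\ell\in S''$ we have $\pos^A_i=\{\ell\}$ while $\pos^{A\cap B}_i$ is a set of children of $\ell$, again yielding inclusion; and when $\ell\notin S''$ the two positions coincide. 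Given this containment, let $C=A_1\cap B_1$ and $D=A_2\cap B_2$ be distinct blocks of $\Gamma''$; then $A_1\neq A_2$ or $B_1\neq B_2$. In the first case $X_\P(\pos^C)\cap X_\P(\pos^D)\subseteq X_\P(\pos^{A_1})\cap X_\P(\pos^{A_2})=\emptyset$ by goodness of $\Gamma$, and the second case is symmetric using $\Gamma'$.

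The main obstacle is the bridge between the two forms of niceness. The definition only asserts the weak containment $\Ch_\P(\ell)\subseteq X_\P(A)$, which does not combine across $\Gamma$ and $\Gamma'$, since in general $X_\P(A)\cap X_\P(B)\neq X_\P(A\cap B)$. The single-block fact above is precisely what upgrades this to the set-level statement $\Ch_\P(\ell)\subseteq A$, and it is the disjointness clause of goodness --- not niceness --- that supplies it. Establishing that fact carefully, in particular handling a label $\ell$ that sits at the position in several trees simultaneously, is where the real work lies; the remaining case analyses are routine.
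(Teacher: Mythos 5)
Your proof is correct and follows essentially the same route as the paper's: disjointness via the unique representation of each block of $\Gamma''$ as an intersection $A\cap B$ of blocks of $\Gamma$ and $\Gamma'$, and niceness by cases on whether $\ell\in S''$. The one substantive difference is that your single-block fact (derived from the disjointness clause of goodness) explicitly justifies the passage from (N2)'s weak containment $\Ch_\P(\ell)\subseteq X_\P(A)$ to the set-level containment $\Ch_\P(\ell)\subseteq A$ --- a step the paper's proof takes without comment --- and your verification of $X_\P(\pos^{A\cap B})\subseteq X_\P(\pos^{A})\cap X_\P(\pos^{B})$ likewise fills in a containment the paper leaves implicit.
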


\begin{proof}
We first show that for any distinct $A, B \in \Gamma''$, $X_\P(\pos^A) \cap X_\P(\pos^B) = \emptyset$. We have $A = C \cap C'$, for some $C \in \Gamma$, $C' \in \Gamma'$, and $B = D \cap D'$, for some $D \in \Gamma$, $D' \in \Gamma'$. Since $A \neq B$, we have $C \neq D$ or $C' \neq D'$.  In the first case, $X_\P(\pos^{C}) \cap X_\P(\pos^{D}) = \emptyset$, and, in the second case, $X_\P(\pos^{C'}) \cap X_\P(\pos^{D'}) = \emptyset$. Thus,  $X_\P(\pos^A) \cap X_\P(\pos^B) = \emptyset$.

Next, we show that each set $A \in \Gamma''$ is nice with respect to $S''$.  Suppose $A = C \cap D$, where $C \in \Gamma$ and $D \in \Gamma'$. Consider any $\ell \in \bigcup_{i \in [k]} \pos_i$ and each $i \in [k]$ such that $\Ch_\P(\ell) \cap A \neq \emptyset$.  It must be the case that $\Ch_\P(\ell) \cap C \neq \emptyset$ and $\Ch_\P(\ell) \cap D \neq \emptyset$.  Suppose $\ell \in S''$.  Then, either $\ell \in S$ or $\ell \in S'$, and (N1) must hold for $A$.  If $\ell \notin S''$, then $\ell \notin S$ and $\ell \notin S$.  Thus, by (N2), $\Ch_\P(\ell) \subset C$ and $\Ch_\P(\ell) \subseteq D$ and, thus $\Ch_\P(A) \subseteq D$.  Hence, (N2) holds for $A$. \QED
\end{proof}

Now, to prove Lemma \ref{lemma:MinimalPart}, suppose, on the contrary, that there exist at least two distinct minimal good partitions $(S,\Gamma), (S',\Gamma')$. By Lemma \ref{lemma:nicePartitionOnSqSet}, $(S'', \Gamma'') = (S,\Gamma) \sqcap (S',\Gamma')$ is also a good partition of $\Ch_\P(\pos)$. But $(S'',\Gamma'') \sqsubset (S,\Gamma)$, contradicting the assumption that $(S,\Gamma)$ is minimal. \QED


\subsection*{Proof of Corollary \ref{cor:nonemptySemi}}

Suppose, on the contrary, that $\pos$ has an agreement tree, but $S = \emptyset$.   Since $(S,\Pi)$ is minimal, Lemma \ref{lemma:MinimalPart} implies that $S' = \emptyset$ for \emph{every} good decomposition $(S', \Pi')$ of $\pos$. But, by Lemma \ref{lemma:iffstatement}, this implies that $\pos$ has no agreement tree, a contradiction.
\QED

\subsection*{Proof of Lemma \ref{lemma:computesuccessor}}

The $j$th execution of the body of the loop, $j > 1$, removes one bad label from $S_{j-1}$. Thus, $S_j \subset S_{j-1}$.  Since $|S| \le k$, this implies that the number of iterations is at most $k$.

Every set $\CS' \in \Gamma_j$ is either in $\Gamma_{j-1}$ or is the union of two or more sets in $\Gamma_{j-1}$. Hence, for every set $\CS \in {\Gamma}_{j -1}$, there exists a set $\CS' \in {\Gamma}_j$ such that $\CS \subseteq \CS'$.  Thus, $(S_{j-1}, \Gamma_{j-1}) \sqsubset (S_j, \Gamma_j)$.


Next, we argue that $(S_j, \Gamma_j) \sqsubseteq (S^*, \Gamma^*)$, for each $j \in \{0,1, \dots , r\}$. Consider $j = 0$.  
We use the following observation.

\begin{quote}
For $i \in [2]$, let $R_i$ be a a subset of $\bigcup_{i \in [k]} \pos_i$ and let
$\Gamma_i = \{A: A = W \cap \Ch_\P(\pos),$ for some connected component $W$ of $\DG(\pos) \setminus R_i\}$.  If $R_1 \subseteq R_2$, then, for each set $A \in \Gamma_1$, there exists a set $B \in \Gamma_2$ such that $A \supseteq B$.
\end{quote}

The above observation and the fact that $S_0 \supseteq S^*$ imply that for each set $A \in \Gamma_0$, there exists a set $A^* \in \Gamma^*$ such that $A \subseteq A^*$.  Thus, $(S_0,\Gamma_0) \sqsubseteq (S^*,\Gamma^*)$.

Now assume that $(S_p, \Gamma_p) \sqsubseteq (S^*, \Gamma^*)$, for each $p \in \{0,1, \dots , j-1\}$, $j > 1$.  By the observation above, it suffices to show that $S_j \supseteq S^*$. Note that $S_j = S_{j-1} \setminus \{\ell\}$, where $\ell$ is the bad label chosen in line \ref{algo:chooseEll}, which cannot be in $S^*$.  Thus, $S_j \supseteq S^*$.

We claim that, for each $j \in \{0,1, \dots , r\}$, each $\ell \in \bigcup_{i \in [k]} \pos_i \setminus S_j$, there is an $A \in \Gamma_j$ such that $\Ch_\P(\ell) \subseteq X_\P(A)$.  This is true by construction for $j = 0$, and the body of the \textbf{while} loop ensures that this remains true throughout the execution of the algorithm.
At termination of the \textbf{while} loop, $S_r$ contains no bad labels. Thus, 
$(S_r,\Gamma_r)$ satisfies properties (N1) and (N2). Further, it can be shown that for every two distinct sets $A,B \in \Gamma$, $X_\P(\pos^A) \cap X_\P(\pos^B) = \emptyset$.  Thus, $(S_r,\Gamma_r)$ is a good partition of $\Ch_\P(\pos)$.
\QED
%
%
%

\subsection*{Proof of Lemma \ref{time:maintainGraph}}

 We assume that we use the \textit{HDT} data structure \cite{HolmLichtenbergThorup:2001} to maintain the connected components of $\DG$, as nodes and edges are removed from it. Initializing HDT for \DG\ takes $\mathcal{O}(nk \log (nk))$ time; subsequent connectivity queries and edge and node deletions take $\Oh(\log^2 (nk))$ time \cite{HolmLichtenbergThorup:2001}.

Line \ref{algo:initGamma} of \ComputeSuccessor computes $\DG(\pos) \setminus S$ by successively deleting the edges from each label $\ell \in S$ to $\Ch_\P(\ell)$, and then delete $\ell$ itself.  Note that, some of these deletions may have already been performed for some ancestor position of \pos, where that label was also exposed. We say this type of exposed labels is \emph{old}. We refer to the labels that are exposed for the first time in $\pos$ as \emph{new} labels.  For each position \pos\ considered in line \ref{algo:initGamma} of \ComputeSuccessor, we only need to delete edges from each new label $\ell$ in \pos\ (and then delete $\ell$ itself).  Therefore, each vertex and edge of $\DG$ is deleted at most once, and the total number of vertex and edge deletions in $\Oh(nk)$ over the entire execution of \BuildAST, for a total of $\Oh(nk \log^2(nk))$ time.

Whenever an each edge deletion splits up a connected component, $\Ch_\P(\pos)$ is itself spilt, and we need to update the associated information.  We can do so  in $\mathcal{O}(M_p \log M_p)$ time by scanning the smaller of the two new connected components, as done in earlier papers \cite{FBLiu2019,FBGSV2015}.  We omit the details.

The \textbf{while} loop of lines \ref{algo:getSuccWhile}--\ref{algo:getSuccWhileEnd} of \ComputeSuccessor merges some of the components produced by line \ref{algo:initGamma}.  These operations do not modify the display graph.  We deal with these operations in Lemma \ref{time:getDec}.
\QED

\subsection*{Proof of Lemma \ref{time:initGD}}

To build sets $S$ and $K$ in line \ref{algo:initSK}, we scan each $\pos_i$ in $\pos$ for each $i \in [k]$. Given $\ell \in \pos_i$, update $\ell.\expose$ with $i$ and test if $\ell$ is exposed. Suppose $\pos$ has a parent position $\pos^*$.  Then, exposed label $\ell\ \in \pos_i$ is \textit{new} if $\pos_i \neq \pos^*_i$. This step takes $\mathcal{O}(k)$ time.

Now, consider Line \ref{algo:initGamma}.  
To find $\Gamma$, we scan each label $a \in \Ch_{T_i}(\pos_i)$ for each $i \in [k]$ and retrieve the set $A \in \Gamma$ that contains $a$ using the mapping from $\Ch_\P(\pos)$ to $\Gamma$. 
The entire process takes $\mathcal{O}(\sum_{i \in [k]} d_i)$ time. 
\QED

\subsection*{Proof of Lemma \ref{time:getDec}}

By Lemma \ref{lemma:computesuccessor} the \textbf{while} loop iterates $\mathcal{O}(k)$ times.  We complete the proof by showing that each iteration  takes $\mathcal{O}(\sum_{i \in [k]} d_i)$ time.  We rely on  the following below, which follows from the fact that, in line \ref{algo:initGamma} of \ComputeSuccessor, $\DG(\pos) \setminus S$ is obtained by deleting at most $\sum_{i \in [k]} d_i$ edges from $\DG(\pos)$.  

\begin{observation}\label{obs:sizeGamma}
$|\Gamma| \le \sum_{i \in [k]} d_i$.
\end{observation}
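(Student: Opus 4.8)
\noindent
The plan is to bound $|\Gamma|$ directly, by observing that the family constructed in line~\ref{algo:initGamma} of \ComputeSuccessor is a \emph{partition} of $\Ch_\P(\pos)$ into parts indexed by connected components, so its cardinality cannot exceed $|\Ch_\P(\pos)|$, which in turn is at most $\sum_{i \in [k]} d_i$. Concretely, the sets of $\Gamma$ are the nonempty intersections $W \cap \Ch_\P(\pos)$ as $W$ ranges over the connected components of $\DG(\pos) \setminus S$. Distinct components are vertex-disjoint, so distinct sets of $\Gamma$ are disjoint; hence $\Gamma$ is a collection of pairwise disjoint nonempty subsets of $\Ch_\P(\pos)$, and therefore $|\Gamma| \le |\Ch_\P(\pos)|$.

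To finish, I would bound $|\Ch_\P(\pos)|$. By definition $\Ch_\P(\pos) = \bigcup_{i \in [k]} \Ch_{T_i}(\pos_i)$, so $|\Ch_\P(\pos)| \le \sum_{i \in [k]} |\Ch_{T_i}(\pos_i)|$. Since \pos\ is a position, each $\pos_i$ is either empty or a single label $\ell_i$, and in the latter case $|\Ch_{T_i}(\ell_i)| \le d_i$ by definition of $d_i$. Summing gives $|\Ch_\P(\pos)| \le \sum_{i \in [k]} d_i$, and combining with the previous paragraph yields $|\Gamma| \le \sum_{i \in [k]} d_i$.

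The one point requiring care --- and the main, if minor, obstacle --- is justifying that $\Gamma$ really is a partition of $\Ch_\P(\pos)$ rather than of some smaller set; this needs $\Ch_\P(\pos) \cap S = \emptyset$, so that every element of $\Ch_\P(\pos)$ survives the deletion of $S$ and thus lies in exactly one component of $\DG(\pos) \setminus S$. I would argue this from the definition of ``exposed'': a label in $S$ is exposed, hence it cannot be a proper descendant of any position label (otherwise the exposedness condition would force it to equal that position label, a contradiction). But every element of $\Ch_\P(\pos)$ is a child, hence a proper descendant, of some position label, so it is not exposed and cannot belong to $S$. This also matches the edge-deletion intuition noted before the observation: the only edges of $\DG(\pos)$ incident to $S$ are the parent-child edges from each position label to its children, of which there are at most $\sum_{i \in [k]} d_i$. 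Finally, I would remark that the bound persists through \ComputeSuccessor's while loop, since each iteration replaces a subfamily of $\Gamma$ by its union (line~\ref{algo:updateGamma}), which can only decrease $|\Gamma|$.
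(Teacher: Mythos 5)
Your proof is correct, but it reaches the bound by a genuinely different (and more self-contained) route than the paper. The paper's proof is a one-liner tied to the algorithmic operation: it observes that $\DG(\pos) \setminus S$ is obtained from $\DG(\pos)$ by deleting at most $\sum_{i \in [k]} d_i$ edges (the edges from each label in $S$ to its children), so that the number of components --- and hence $|\Gamma|$ --- is bounded by the number of deletions. You instead bound the cardinality of $\Gamma$ directly: distinct connected components are vertex-disjoint, so $\Gamma$ is a family of pairwise disjoint nonempty subsets of $\Ch_\P(\pos)$, giving $|\Gamma| \le |\Ch_\P(\pos)| \le \sum_{i \in [k]} d_i$, where the second inequality just sums $|\Ch_{T_i}(\pos_i)| \le d_i$ over the trees with $\pos_i \neq \emptyset$. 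Your version is logically a bit tighter: an edge-deletion count bounds only the number of components \emph{created}, so the paper's argument implicitly needs an accounting of how many components of $\DG(\pos)$ meet $\Ch_\P(\pos)$ before any deletion, whereas your disjointness argument needs no connectivity hypothesis on $\DG(\pos)$ at all. Two further remarks on your ``point requiring care'': your proof that $\Ch_\P(\pos) \cap S = \emptyset$ (an exposed label cannot be a proper descendant of a position label in any tree containing it, while every child label is such a proper descendant) is correct and is indeed what makes $\Gamma$ a partition of all of $\Ch_\P(\pos)$; but note that for the inequality alone it is dispensable --- pairwise disjointness of the nonempty sets $W \cap \Ch_\P(\pos)$ already yields $|\Gamma| \le |\Ch_\P(\pos)|$ even if some children had been deleted. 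Finally, your closing remark that the bound persists through the \textbf{while} loop because line \ref{algo:updateGamma} only replaces subfamilies of $\Gamma$ by their unions is exactly the right thing to check, since Lemma \ref{time:getDec} applies Observation \ref{obs:sizeGamma} in every iteration, not just to the initial $\Gamma$ of line \ref{algo:initGamma}.
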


For each set $A \in \Gamma$, we maintain a count, initialized to $0$. By Observation \ref{obs:sizeGamma}, the total time to initialize the counts is $\mathcal{O}(\sum_{i \in K} d_i)$ per iteration.
To search for a bad label, for each $i \in K$, we scan each $a \in \Ch_{T_i}(\pos_i)$, and increase the count of the set $A$ to which $a$ belongs. If the count for any set $A \in \Gamma$ exceeds one, then $\ell \in \pos_i$ is a bad label and the search ends. 

Next, we consider the time taken by the body of the \textbf{while} loop.  Retrieving $K' = \ell.\expose$ in Line \ref{algo:retrieveKp} takes constant time. By Observation \ref{obs:sizeGamma} and the fact that  we have constant time access to mappings, building $\Gamma'$ in line \ref{algo:buildGammap}  takes $\mathcal{O}(\sum_{i \in K'} d_i)$ time.  

We compute the union of the sets in $\Gamma'$ in line \ref{algo:unions} as follows. We initialize $B$ to the empty set, and then successively consider each $A \in \Gamma'$. At each step, we append every child label $a$ from a non-empty entry in the representation of $A$ to the corresponding entry in $B$, and change the mapping of $a$  to $B$.  Given our representation of the sets in $\Gamma$, this process takes $\mathcal{O}(\sum_{i \in [k]} d_i)$ time in each iteration of the \textbf{while} loop.


Updating $\Gamma$ in Line \ref{algo:updateGamma} requires removing every $A \in \Gamma'$ from $\Gamma$ and then adding $B$. The time spent on updates is $\mathcal{O}(|\Gamma'|)$, which is $\mathcal{O}(\sum_{i \in K'} d_i)$.  Finally, updating $S$ in Line \ref{algo:getSuccWhileEnd} takes constant time and updating $K$ takes $\mathcal{O}(|K'|)$ time. 
\QED

\subsection*{Proof of Theorem \ref{thm:totalTime}}

By Lemmas \ref{time:initGD} and \ref{time:getDec}, lines \ref{algo:initSK}--\ref{algo:getSuccWhileEnd} of \ComputeSuccessor, take $\mathcal{O}(k\sum_{i \in [k]} d_i)$ time. By Theorem  \ref{theo:buildast}, \ComputeSuccessor is invoked $\mathcal{O}(n)$ times.  Thus, lines \ref{algo:initSK}--\ref{algo:getSuccWhileEnd} of \ComputeSuccessor take $\mathcal{O}(nk\sum_{i \in [k]} d_i)$ time the entire execution of \BuildAST.

For each $A \in \Gamma$, the \textbf{foreach} loop of lines \ref{algo:foreach}--\ref{algo:updatePi}  obtains the corresponding successor position using Equation \eqref{eqn:posDefn} in $\Oh(k)$ time. Since \BuildAST generates $\Oh(n)$ positions, the total time spent on the \textbf{foreach} loop of lines \ref{algo:foreach}--\ref{algo:updatePi} of \ComputeSuccessor over the entire execution of \BuildAST is $\Oh(nk)$.

To summarize,  \BuildAST takes $\mathcal{O}(nk\sum_{i \in [k]} d_i)$ to compute successors and, by Lemma \ref{time:maintainGraph}, $\mathcal{O}(nk \log^2 (nk))$ time to maintain the display graph.  The claimed time bound follows.
\QED

\end{document}